\documentclass[runningheads, envcountsame, a4paper]{llncs}

\usepackage[utf8]{inputenc}
\usepackage{bbold}
\usepackage{amsmath}
\usepackage{amssymb}
\usepackage{color}
\usepackage{microtype}
\newcommand\LQ[1]{#1\backslash}
\newcommand\RQ[1]{#1 /}

\newcommand{\ba}{\begin{array}}
\newcommand{\ea}{\end{array}}

\newcommand{\True}{\Rempty}
\newcommand{\False}{\Rnull}

\newcommand\Power{\wp}
\newcommand\nat{\mathbb{N}}
\newcommand\bool{\mathbb{B}}

\newcommand\Rnull{\mathbf0}
\newcommand\Rempty{\mathbf1}

\newcommand\Lang{\mathcal{L}}
\renewcommand\L\Lang

\newcommand\calF{\mathcal{F}}
\newcommand\calR{\mathcal{R}}
\newcommand\calI{\mathcal{I}}
\newcommand\calJ{\mathcal{J}}
\newcommand\hcalI{\hat\calI}
\newcommand\hcalJ{\hat\calJ}
\newcommand\Arity{\#}

\newcommand\IsExtensionOf{\succ}
\newcommand\IsExtensionOrEqual{\succeq}

\newcommand\Upward[1]{{\uparrow}{#1}}
\newcommand\Downward[1]{{\downarrow}{#1}}

\newcommand\NF{\textsf{nf}}

\pagestyle{plain}
\bibliographystyle{splncs03}

\title{Derivatives for Enhanced Regular Expressions}
\author{Peter Thiemann}
\institute{University of Freiburg}

\begin{document}

\maketitle
\setcounter{footnote}{0}
\begin{abstract}
  Regular languages are closed under a wealth of formal language
  operators. Incorporating such operators in regular expressions leads
  to concise language specifications, but the transformation of
  such enhanced regular expressions to finite automata becomes more
  involved.

  We present an approach that enables the direct construction of
  finite automata from regular expressions enhanced with further
  operators that preserve regularity. Our construction is based on an extension of the theory of
  derivatives for regular expressions. To retain the standard results
  about derivatives, we develop a derivability criterion for the compatibility of
  the extra operators with derivatives.

  Some derivable operators do not preserve regularity. Derivatives
  provide a decision procedure for the word problem of
  regular expressions enhanced with such operators. 
  \keywords{automata and logic, regular languages, derivatives}
\end{abstract}
\section{Introduction}
\label{sec:introduction}

Brzozowski derivatives \cite{321249} and Antimirov's partial derivatives \cite{Antimirov96Partial} are
well-known tools to transform  regular expressions to automata and to define algorithms for
equivalence and containment on them \cite{AntimirovIneq,Grabmayer:2005:UPC:2156157.2156171}.
Brzozowski's automaton construction relies on the finiteness of the
set of iterated derivatives when considered up to similarity
(commutativity, associativity, and idempotence for union).  
Derivatives had quite some impact on the study of
algorithms for regular languages on finite words and trees \cite{DBLP:conf/rta/RouV03,CaronChamparnaudMignot2011}.

While derivative-based algorithms have been deprecated for performance
reasons \cite{DBLP:conf/wia/Watson96}, there has been renewed interest in the study of
derivatives and partial derivatives. On the practical side, Owens and coworkers \cite{re-derivs}
report a functional implementation that revives many features. Might
and coworkers \cite{DBLP:conf/icfp/MightDS11} implement parsing for 
context-free languages using derivatives.

A common theme on the theory side is the study of derivative structures for enhancements of regular expressions.
While Brzozowski's original work covered extended regular
expressions, partial derivatives were originally limited to simple expressions without intersection and
complement. It is a significant effort to define partial derivatives for extended regular expressions
\cite{CaronChamparnaudMignot2011}.  

Derivatives have also  been used to study various shuffle
operators for applications in modeling concurrent programs
\cite{DBLP:conf/lata/SulzmannT15}. Later extensions consider forkable expressions with a
new operator that abstracts process creation \cite{SulzmannThiemann2016-lata}. 

Caron and coworkers \cite{DBLP:conf/wia/CaronCM12} study derivatives for multi-tilde-bar
expressions. The tilde (bar) operator adds (removes) $\varepsilon$ from a language. Multi-tilde-bar
applies to a list of languages and (roughly) defines a selective concatenation operation that can be
configured to include or exclude certain languages of the list.

Champarnaud and coworkers \cite{DBLP:conf/lata/ChamparnaudJM12} consider derivatives of approximate regular
expressions (ARE). AREs extend regular expressions with a family of unary operators $\mathbb{F}_k$, for
$k\in\nat$, which enhance their argument language $L$ with all words $u$ such that $d (u, w) \le k$,
for some word $w \in L$. Here, $d$ is a suitable distance function, for example, Hamming distance or
Levenshtein distance.

Traytel and Nipkow \cite{DBLP:journals/jfp/TraytelN15} obtain
decision procedures for MSO using a suitably defined derivative
operation on regular expressions with a projection operation.

The general framework of Caron and coworkers
\cite{DBLP:journals/ita/CaronCM14} generalizes the syntactic structure
underlying a derivative construction to a \emph{support}. A support
generalizes expressions (for constructing Brzozowski derivatives),
sets of expressions (for Antimirov's partial derivatives), and sets of
clausal forms over sets of regular expressions, and thus yields an
encompassing framework in which different kinds of derivative
constructions can be formalized and compared. The authors give a
sufficient criterion for a support to generate a finite number of
iterated derivatives from a given expression along with automata
constructions for deterministic, nondeterministic, and alternating
finite automata. Their work applies to
extended regular expressions with arbitrary boolean functions.

Champernaud and coworkers~\cite{ChamparnaudMignotNicart2015} propose
constrained regular expressions with a notion of comprehension
(filtering by a predicate) and matching. The resulting languages are
in general not regular, but the authors propose expression derivation
for the membership test and study its decidability. 

\subsection*{Contributions}
\label{sec:overview}

In this work, we identify a pattern in the definition of (standard) derivatives for enhancements of
regular expressions that go beyond boolean functions. Concretely, we
consider regular expressions enhanced with further operators on
languages (e.g., shuffle, homomorphism, approximation).
Then we propose \emph{left derivability} and
\emph{$\varepsilon$-testability} as a sufficient condition for the
set of operators such that a syntactic derivative operation is
definable for enhanced expressions. This condition gives rise to a
decision procedure for the membership test for enhanced expressions
via expression derivation.

A refinement, \emph{linear left derivability}, is a sufficient
condition to guarantee finiteness of the set of
dissimilar derivatives of an enhanced expression. The finiteness
condition enables the direct construction of a deterministic finite
automaton. We show that every linear left derivable operator can be defined by a
rational finite state transducer and thus preserves
regularity.

Proofs and further examples may be found in the appendix.

\section{Preliminaries}
\label{sec:preliminaries}

We write $\nat$ for the set of natural numbers, $\bool = \{ \False, \True \}$ for the set of
booleans, and $X \uplus Y$ for the disjoint union of sets $X$ and
$Y$. We sometimes write $(\overline{E_k}^{k=1,\dots,n})$ for the tuple
$(E_1, \dots, E_n)$ where $E_k$ is some entity depending on $k$.

An alphabet $\Sigma$ is a finite set of symbols. The set $\Sigma^*$ denotes the set of finite words
over $\Sigma$, $\varepsilon\in\Sigma^*$ stands for the empty word, and $\Sigma^+ = \Sigma^*
\setminus \{\varepsilon\}$. For
$u,v, w\in\Sigma^*$, we write $|u|\in\nat$ for the length of $u$, $u\cdot v $ (or just $uv$) for the
concatenation of words, and $w \IsExtensionOf v$ if $v$ is a proper suffix of $w$, that is, $\exists
u\in \Sigma^+$ such that $w=u\cdot v$.

Given languages $U,V,W \subseteq \Sigma^*$, 
concatenation extends to languages as usual: $U
\cdot V = \{ u \cdot v \mid u\in U, v \in V\}$.
The Kleene closure is defined as the smallest set $U^* \subseteq \Sigma^*$ such that $U^* = \{\varepsilon\} \cup U
\cdot U^*$.
We write the \emph{left quotient} as $\LQ{U}W
= \{ v \mid v\in \Sigma^*, \exists u\in U: uv \in W \}$ and the \emph{right quotient} as $\RQ{W}{U}
= \{ v \mid v\in \Sigma^*, \exists u\in U: vu \in W \}$. For a singleton language $U = \{u\}$, we write $\LQ{u}W$ ($\RQ{W}{u}$)
for the left (right) quotient.

A \emph{ranked alphabet} $\calF$ is a finite set of \emph{operator symbols} with a function
$\Arity : \calF \to \nat$ that determines the \emph{arity} of each symbol. We write $\calF^{(n)} = \{
F\in\calF \mid \Arity (F) = n \}$ for the symbols of arity $n$. The set \emph{$T_\calF (X)$
  of $\calF$-terms over a set $X$} is defined inductively. If $x \in X$, then
$x \in T_\calF (X)$. If $n\in\nat$, $F \in \calF^{(n)}$, and
$t_1, \dots, t_n \in T_\calF (X)$, then $F (t_1, \dots, t_n) \in
T_\calF (X)$.

An \emph{$\calF$-algebra} consists of a carrier set $M$ and an interpretation function $\calI :(n:\nat)
\to \calF^{(n)} \to M^n \to M$. Given a function $\calI_0 : X \to M$, the \emph{term interpretation} $\hcalI({t})$,
for $t \in T_\calF (X)$, is defined inductively as follows. If $x \in X$, then $\hcalI({x}) =
\calI_0(x)$. If $F\in\calF^{(n)}$ and $t_1,\dots, t_n \in T_\calF (X)$, then $\hcalI (F (t_1, \dots,
t_n)) = \calI (n) (F) (\hcalI (t_1), \dots, \hcalI (t_n))$.  We often
write $T_\calF$ in place of $T_\calF (\emptyset)$. 

To avoid notational clutter, we fix an arbitrary alphabet $\Sigma$.

\begin{definition}\label{def:regular-expression}
  The \emph{regular alphabet} is defined by $\calR = \Sigma \uplus \{ \Rnull, \Rempty, {\cdot}, {+}, {*} \}$ with
  arities $\Arity (x) =0$, for $x\in\Sigma$, $ \Arity (\Rempty) = \Arity (\Rnull) = 0$, $\Arity
  (*) = 1$, and $\Arity ({\cdot}) = \Arity ({+}) 
  = 2$.

  Similarity is defined as the smallest equivalence relation
  ${\equiv} \subseteq T_\calR \times T_\calR$ that enforces left and right unit, idempotence,
  commutativity, and associativity for the $+$ operator. For all $r, s, t \in T_\calR$, the relation
  $\equiv$ contains the pairs:
  \begin{align*}
    r + \Rnull &\equiv r  & \Rnull + s & \equiv s &
                                                    r + r & \equiv r &
               r+s &\equiv s+r &  (r+s)+t &\equiv r+ (s+t)
  \end{align*}

  The set $R$ of regular expressions over $\Sigma$ is defined as the
  quotient term algebra $R = T_{\calR}/ (\equiv)$. 

  The language of $r\in R$ is defined by  $\Lang (r) = \hcalI (r)$, that
  is, the interpretation of the term in the $\calR$-algebra with carrier set $\Power(\Sigma^*)$ and
  interpretation function
  \begin{displaymath}
    \begin{array}[t]{ll}
    \calI (0) (\Rnull) &= \{\} \\
    \calI (0) (\Rempty) &= \{\varepsilon\} \\
    \calI (0) (x) &= \{ x \} 
    \end{array}
    \qquad
    \begin{array}[t]{ll}
    \calI (1) (*) &= U \mapsto U^* \\
    \calI (2) ({\cdot}) &= (U,V) \mapsto U\cdot V \\
    \calI (2) (+) &= (U,V) \mapsto U \cup V
    \end{array}
  \end{displaymath}
\end{definition}
The interpretation function $\calI$ is compatible with the definition
of $R$ as a quotient term algebra because the interpretation of $+$
maps equivalent expressions to the same language. We usually work with
a unique representative for each equivalence class computed by a
function $\NF$ (see \cite{Grabmayer:2005:UPC:2156157.2156171}).
We use parenthesized infix notation for the binary operators $\cdot$ and $+$ and postfix
superscript for the unary $*$. We adopt the convention that $\cdot$
binds stronger than $+$ to omit parentheses. The overloading of $\Rnull$ and $\Rempty$ as regular expressions and
boolean values is deliberate.
\begin{definition}
  The operations $\odot, \oplus : R \times R \to R$ are \emph{smart
    concatenation} and \emph{union} constructors for regular
  expressions. Operator $\odot$ binds stronger than $\oplus$.
  \begin{displaymath}
    r \odot s =
    \begin{cases}
      \Rnull & r=\Rnull \vee s=\Rnull \\
      r & s=\Rempty \\
      s & r=\Rempty \\
      (r\cdot s) & \text{otherwise}
    \end{cases}
    \qquad
    r \oplus s = \NF (r + s)
  \end{displaymath}
\end{definition}
\begin{lemma}
  For all $r$, $s$: 
  $\Lang (r \odot s) = \Lang (r\cdot s)$; $\Lang (r \oplus s) = \Lang (r+s)$.
\end{lemma}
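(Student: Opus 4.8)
The plan is to prove the two equations separately, each by a routine case analysis following the definitions of $\odot$ and $\oplus$ together with the interpretation function $\calI$ from Definition~\ref{def:regular-expression}.

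For the concatenation equation, I would unfold $\Lang(r\cdot s) = \Lang(r)\cdot\Lang(s)$ and then split on the four cases in the definition of $r\odot s$. If $r=\Rnull$ or $s=\Rnull$, then one factor is $\{\}$, so $\Lang(r)\cdot\Lang(s)=\{\}=\Lang(\Rnull)$, matching $\Lang(r\odot s)$. If $s=\Rempty$ (and $r\neq\Rnull$), then $\Lang(s)=\{\varepsilon\}$, and $\Lang(r)\cdot\{\varepsilon\}=\Lang(r)$; symmetrically for $r=\Rempty$. In the remaining case $r\odot s = r\cdot s$, so the claim is immediate. The only subtlety worth a remark is that the cases in the definition of $\odot$ overlap (e.g.\ $r=s=\Rempty$), so I should note that the equation $\Lang(r)\cdot\Lang(s)=\Lang(r)=\Lang(s)$ holds consistently in any overlap; this is the kind of thing the case split handles automatically once one observes $\{\varepsilon\}$ is the unit for $\cdot$ on languages.

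For the union equation, $r\oplus s = \NF(r+s)$ by definition, so it suffices to show $\Lang(\NF(r+s)) = \Lang(r+s) = \Lang(r)\cup\Lang(s)$. The second equality is just the definition of $\calI(2)(+)$. The first equality is the statement that $\NF$ preserves languages, i.e.\ $\Lang$ is constant on each $\equiv$-equivalence class; this was already observed in the text immediately after Definition~\ref{def:regular-expression} (the interpretation of $+$ maps $\equiv$-equivalent expressions to the same language, which is exactly why $\Lang$ is well defined on $R = T_\calR/({\equiv})$). Concretely one checks that $\cup$ on $\Power(\Sigma^*)$ satisfies the unit, idempotence, commutativity, and associativity laws generating $\equiv$, so $r\equiv s$ implies $\Lang(r)=\Lang(s)$; since $\NF(r+s)\equiv r+s$, we are done.

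I do not anticipate a real obstacle here: both parts are direct consequences of unfolding definitions and of the elementary algebra of $\cdot$ and $\cup$ on languages. If anything needs care, it is bookkeeping around the overlapping cases in the definition of $\odot$ and making explicit the already-noted fact that $\NF$ is language-preserving; neither is mathematically deep.
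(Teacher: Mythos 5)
Your proof is correct; the paper states this lemma without proof, and your case analysis on the definition of $\odot$ plus the observation that $\NF$ preserves $\Lang$ (because $\cup$ validates the laws generating $\equiv$) is exactly the routine argument the paper implicitly relies on.
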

\begin{definition}
  A regular expression $r$ is \emph{nullable} if $\varepsilon \in 
  \Lang(r)$.
  The function $N :R \to \{\Rnull, \Rempty\}$ detects nullable expressions:
  $N(\Rempty) = \Rempty$.
  $N(\Rnull) = \Rnull$.
  $N(x) = \Rempty$.
  $N(r\cdot s) = N(r) \odot N(s)$.
  $N(r+s) = N(r) \oplus N(s)$.
  $N(r^*) = \Rempty$.
\end{definition}
\begin{lemma}\label{lemma:nullable}
  For all $r \in R$. $N(r) = \Rempty$ iff $\varepsilon \in \Lang(r)$.
\end{lemma}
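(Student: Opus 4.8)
The plan is to prove Lemma~\ref{lemma:nullable} by structural induction on $r \in R$, showing simultaneously that $N(r) = \Rempty$ iff $\varepsilon \in \Lang(r)$ and (equivalently, since $N$ ranges over $\{\Rnull,\Rempty\}$) that $N(r) = \Rnull$ iff $\varepsilon \notin \Lang(r)$. First I would dispatch the base cases directly from the definitions: $N(\Rempty) = \Rempty$ and $\varepsilon \in \Lang(\Rempty) = \{\varepsilon\}$; $N(\Rnull) = \Rnull$ and $\varepsilon \notin \Lang(\Rnull) = \{\}$; and $N(x) = \Rempty$ with $\varepsilon \notin \Lang(x) = \{x\}$ --- wait, this last case shows $N(x) = \Rempty$ but $\varepsilon \notin \{x\}$, so I must be careful: actually the claim as stated would be \emph{false} for $x \in \Sigma$ unless $N(x)$ is meant to be $\Rnull$. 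I would flag that the intended reading is $N(x) = \Rnull$ (a likely typo in the definition), or else restrict attention to the corrected definition; under $N(x)=\Rnull$ the base case goes through since $\varepsilon \notin \{x\}$.

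For the inductive step I would treat the three compound constructors. For $r \cdot s$: by the induction hypothesis $N(r) = \Rempty \Leftrightarrow \varepsilon \in \Lang(r)$ and likewise for $s$. Since $\varepsilon \in \Lang(r) \cdot \Lang(s)$ iff $\varepsilon \in \Lang(r)$ and $\varepsilon \in \Lang(s)$, I need $N(r) \odot N(s) = \Rempty$ iff both $N(r) = \Rempty$ and $N(s) = \Rempty$; this is immediate from the case analysis defining $\odot$ once we observe that on arguments drawn from $\{\Rnull,\Rempty\}$ the smart concatenation $\odot$ computes exactly boolean conjunction (e.g.\ $\Rempty \odot \Rempty = \Rempty$ via the $s = \Rempty$ branch, and any occurrence of $\Rnull$ yields $\Rnull$). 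For $r + s$: similarly $\varepsilon \in \Lang(r) \cup \Lang(s)$ iff $\varepsilon \in \Lang(r)$ or $\varepsilon \in \Lang(s)$, and I would check that $N(r) \oplus N(s) = \NF(N(r) + N(s)) = \Rempty$ iff $N(r) = \Rempty$ or $N(s) = \Rempty$, using that $\NF$ preserves the similarity class (hence the language) together with the similarity axioms $\Rnull + \Rnull \equiv \Rnull$, $\Rempty + \Rnull \equiv \Rempty$, $\Rnull + \Rempty \equiv \Rempty$, $\Rempty + \Rempty \equiv \Rempty$, so that $\oplus$ restricted to $\{\Rnull,\Rempty\}$ computes boolean disjunction. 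For $r^*$: $N(r^*) = \Rempty$ by definition, and $\varepsilon \in \Lang(r)^*$ always holds since $\Lang(r)^* = \{\varepsilon\} \cup \Lang(r)\cdot\Lang(r)^*$ contains $\varepsilon$, so this case is unconditional.

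The only genuine subtlety --- and the step I expect to be the main obstacle to stating cleanly rather than technically hard --- is the $r + s$ case, because $\oplus$ is defined through $\NF$ rather than by an explicit table. I would handle this by first recording an auxiliary observation: for $a, b \in \{\Rnull, \Rempty\}$ (viewed as regular expressions), $a \oplus b = \Rempty$ iff $a = \Rempty$ or $b = \Rempty$, and $a \oplus b = \Rnull$ otherwise. This follows because $a + b$ is similar to $\Rempty$ exactly in the three cases where at least one summand is $\Rempty$ (using the unit and idempotence axioms) and similar to $\Rnull$ only when $a = b = \Rnull$, and $\NF$ returns a canonical representative of the similarity class, which must be $\Rempty$ resp.\ $\Rnull$ since these are the unique minimal terms in their classes. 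Given this observation, the inductive step for $+$ reduces to pure boolean reasoning, and the proof closes.
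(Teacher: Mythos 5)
Your proof is correct and is the standard structural induction that the paper leaves implicit (no proof of Lemma~\ref{lemma:nullable} appears in the appendix), including the right treatment of the only delicate step, namely that $\odot$ and $\oplus$ restricted to $\{\Rnull,\Rempty\}$ compute conjunction and disjunction. Your observation that the clause $N(x) = \Rempty$ for $x\in\Sigma$ must be a typo for $N(x) = \Rnull$ is also correct --- as stated, the lemma would fail on any single-letter expression since $\varepsilon \notin \Lang(x) = \{x\}$.
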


\begin{definition}\label{definition:brzozowski-derivative}
  The \emph{Brzozowski derivative} \cite{321249} is a function $D : \Sigma \times
  T_\calR \to T_\calR$ defined inductively for $a\ne b\in \Sigma$ and
  $r, s\in T_\calR$.
  \begin{align*}
    D (a, \Rnull) &= \Rnull      &     D (a, r+s) &= D (a,r) \oplus D (a, s) \\
    D (a, \Rempty) &= \Rnull     &     D (a, r\cdot s) &= D (a,r) \odot s \oplus N (r)  \odot D (a,s)\\
    D (a, a) &= \Rempty          &    D (a,r^*) &= D (a,r) \odot r^*
    \\
    D (a, b) &= \Rnull
  \end{align*}
  It extends to a function on words and languages $D : \Sigma^* \times T_\calR \to
  T_\calR$ and $D : \Power (\Sigma^*) \times T_\calR \to \Power
  (T_\calR)$ as usual  ($a\in\Sigma$, $w\in\Sigma^*$, $U\subseteq \Sigma^*$): 
  \begin{align*}
    D (\varepsilon, r) &= r
    & D (a\cdot w, r) &= D (w, D (a, r))
    & D (U, r) &= \{ D (w, r) \mid w \in U \}
  \end{align*}
\end{definition}
\begin{theorem}[\cite{321249}]\label{theorem:word-derivative}
  For all $w\in\Sigma^*$, $r\in T_\calR$, 
  $\Lang (D (w, r)) = w \setminus \Lang (r)$.
\end{theorem}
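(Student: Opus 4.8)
The plan is to reduce the claim to the one-letter case and then iterate. Concretely, I would first prove the auxiliary statement that for every $a\in\Sigma$ and $r\in T_\calR$,
$\Lang(D(a,r)) = a\setminus\Lang(r) = \{\, v\in\Sigma^* \mid a v\in\Lang(r)\,\}$, by structural induction on the term $r$. The constant and letter cases are immediate from the definitions of $D$ and $\Lang$: the quotients of $\{\}$, $\{\varepsilon\}$, and $\{b\}$ with $b\ne a$ are all empty and the corresponding derivatives are $\Rnull$, while $a\setminus\{a\} = \{\varepsilon\} = \Lang(\Rempty)$. The union case follows because left quotient distributes over $\cup$ and, by the smart-constructor lemma together with the clause for $+$ in the definition of $\Lang$, $\Lang(D(a,r)\oplus D(a,s)) = \Lang(D(a,r))\cup\Lang(D(a,s))$, so the induction hypothesis applies directly.

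The concatenation case is the heart of the argument. A word $av$ belongs to $\Lang(r)\cdot\Lang(s)$ exactly when it admits a factorization $av = u v'$ with $u\in\Lang(r)$ and $v'\in\Lang(s)$; splitting on whether $u=\varepsilon$ shows that $a\setminus(\Lang(r)\cdot\Lang(s))$ equals $(a\setminus\Lang(r))\cdot\Lang(s)$ together with $a\setminus\Lang(s)$ in case $\varepsilon\in\Lang(r)$, and nothing extra otherwise. On the syntactic side, $\Lang\big(D(a,r)\odot s \,\oplus\, N(r)\odot D(a,s)\big) = \Lang(D(a,r))\cdot\Lang(s)\,\cup\,\Lang(N(r))\cdot\Lang(D(a,s))$ by the smart-constructor lemma, and Lemma~\ref{lemma:nullable} gives $\Lang(N(r)) = \{\varepsilon\}$ when $\varepsilon\in\Lang(r)$ and $\Lang(N(r)) = \{\}$ otherwise. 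Two applications of the induction hypothesis then identify both sides. For the Kleene star, using $\Lang(r^*) = \{\varepsilon\}\cup\Lang(r)\cdot\Lang(r^*)$ and inspecting the first non-empty factor in a decomposition witnessing membership of a word $av\in\Lang(r^*)$, one obtains $a\setminus\Lang(r^*) = (a\setminus\Lang(r))\cdot\Lang(r^*)$, which matches $\Lang(D(a,r)\odot r^*) = \Lang(D(a,r))\cdot\Lang(r^*)$ after applying the induction hypothesis.

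Finally I would lift the result to words by induction on $|w|$. The base case $w=\varepsilon$ is trivial since $D(\varepsilon,r) = r$ and $\varepsilon\setminus\Lang(r) = \Lang(r)$. For $w = a w'$, the quotient identity $(a w')\setminus L = w'\setminus(a\setminus L)$ together with $D(aw',r) = D(w',D(a,r))$, the one-letter case applied to $r$, and the induction hypothesis applied to $w'$ and the term $D(a,r)$ give $\Lang(D(aw',r)) = w'\setminus\Lang(D(a,r)) = w'\setminus(a\setminus\Lang(r)) = (aw')\setminus\Lang(r)$. I expect the concatenation case to be the main obstacle: it is the place where the nullability bookkeeping via $N$ and Lemma~\ref{lemma:nullable} must be combined with the semantic soundness of both smart constructors, and getting the exact set identity to line up with the three-summand right-hand side of $D(a,r\cdot s)$ requires care; the star case is the only other point that needs a small combinatorial argument about leading empty factors.
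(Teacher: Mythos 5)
Your proof is correct and is the standard argument: the paper itself does not prove this theorem but cites it to Brzozowski~\cite{321249}, and your two-stage induction (structural induction on $r$ for the single-letter case, then induction on $|w|$ using $\LQ{(aw')}{L} = \LQ{w'}{(\LQ{a}{L})}$ and $D(aw',r)=D(w',D(a,r))$) is exactly the classical proof, with the nullability bookkeeping in the concatenation case handled correctly via Lemma~\ref{lemma:nullable} and the smart-constructor lemma. The only nit is cosmetic: $D(a,r\cdot s)$ has two summands, not three.
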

\begin{theorem}[\cite{321249}]
  For all $r \in T_\calR$,
  $\Lang(r) =\Lang\Big( N(r) + \sum_{a\in\Sigma} D (a, r) \Big)$.
\end{theorem}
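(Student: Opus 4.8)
The plan is to prove the expansion $\Lang(r) = \Lang\bigl(N(r) + \sum_{a\in\Sigma} D(a,r)\bigr)$ by reducing it to the already-established word-derivative theorem (Theorem~\ref{theorem:word-derivative}) together with the nullability lemma (Lemma~\ref{lemma:nullable}). The key observation is that every word $w\in\Sigma^*$ is either $\varepsilon$ or of the form $a\cdot v$ for a unique $a\in\Sigma$ and $v\in\Sigma^*$, so $\Sigma^* = \{\varepsilon\} \uplus \biguplus_{a\in\Sigma} a\cdot\Sigma^*$. Splitting $\Lang(r)$ along this partition gives $\Lang(r) = (\Lang(r)\cap\{\varepsilon\}) \cup \bigcup_{a\in\Sigma} (\Lang(r)\cap a\cdot\Sigma^*)$, and the bulk of the argument is identifying each piece with a summand on the right-hand side.

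First I would handle the $\varepsilon$-part: by Lemma~\ref{lemma:nullable}, $\Lang(N(r)) = \{\varepsilon\}$ if $\varepsilon\in\Lang(r)$ and $\Lang(N(r)) = \emptyset$ otherwise, so in both cases $\Lang(N(r)) = \Lang(r)\cap\{\varepsilon\}$. Next, for each $a\in\Sigma$ I would show $\Lang(a\cdot D(a,r)) = \Lang(r)\cap a\cdot\Sigma^*$. Expanding the left side, $\Lang(a\cdot D(a,r)) = \{a\}\cdot\Lang(D(a,r))$, and by Theorem~\ref{theorem:word-derivative} (with $w = a$), $\Lang(D(a,r)) = \LQ{a}\Lang(r) = \{v \mid av\in\Lang(r)\}$; prepending $a$ recovers exactly the words of $\Lang(r)$ that start with $a$. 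Wait---a subtlety: the right-hand side of the statement as written is $N(r) + \sum_{a\in\Sigma} D(a,r)$, not $N(r) + \sum_{a\in\Sigma} a\cdot D(a,r)$. So I would first check whether the intended reading includes the leading letter; if the statement is literally as printed, then it can only hold under the convention that $D(a,r)$ is tacitly prefixed by $a$ (this is a standard but sometimes-elided convention in the derivative literature, e.g.\ Brzozowski's original expansion), and I would make that convention explicit. Assuming the prefixing, the three identities combine: taking the union over the partition of $\Sigma^*$ and using that $\Lang$ sends $+$ to $\cup$, we get $\Lang\bigl(N(r) + \sum_{a\in\Sigma} a\cdot D(a,r)\bigr) = \Lang(N(r)) \cup \bigcup_{a\in\Sigma}\{a\}\cdot\Lang(D(a,r)) = (\Lang(r)\cap\{\varepsilon\}) \cup \bigcup_{a\in\Sigma}(\Lang(r)\cap a\cdot\Sigma^*) = \Lang(r)$.

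The main obstacle is not mathematical depth---the proof is essentially a one-line consequence of the preceding theorem---but rather getting the bookkeeping around the leading symbol exactly right and stating the convention cleanly, since an overly literal reading of the displayed formula is false (for instance with $r = a$ one has $D(a,a) = \Rempty$ with language $\{\varepsilon\}$, whereas $\Lang(a) = \{a\}$). Once that is settled, the remaining work is the routine verification that the finite sum $\sum_{a\in\Sigma}$ under $\Lang$ distributes over $\cup$, which follows by induction on $|\Sigma|$ from $\calI(2)(+) = \cup$ together with the neutrality of $\Rnull$ for the empty sum. An alternative, slightly slicker route avoids the partition entirely: show the two languages are equal by proving $w\in\Lang(\text{RHS})$ iff $w\in\Lang(r)$ by case analysis on whether $w=\varepsilon$, appealing to Lemma~\ref{lemma:nullable} in the first case and to Theorem~\ref{theorem:word-derivative} in the second. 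I would likely present this element-chasing version, as it sidesteps the need to formalize distributivity of $\Lang$ over finite sums.
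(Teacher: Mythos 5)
Your proof is correct, and the paper itself offers no proof of this theorem (it is cited verbatim from Brzozowski), so there is nothing to diverge from: the decomposition $\Sigma^* = \{\varepsilon\} \uplus \biguplus_{a\in\Sigma} a\cdot\Sigma^*$ combined with Lemma~\ref{lemma:nullable} and Theorem~\ref{theorem:word-derivative} is the standard argument. Your observation that the displayed formula is only true with the summands read as $a \cdot D(a,r)$ is a genuine and correct catch --- the statement as printed is a typo (your counterexample $r = a$ settles it), and the same goes for your tacit use of $N(x) = \Rnull$ for $x \in \Sigma$, which is what Lemma~\ref{lemma:nullable} forces despite the paper's definition reading $N(x) = \Rempty$. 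No gaps.
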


\begin{definition}
  A (nondeterministic) finite automaton (NFA) is a tuple $\mathcal{A} = (Q, \Sigma, \delta, q_0, F)$ where $Q$ is a finite
  set of states, $\Sigma$ an alphabet, $\delta : Q \times \Sigma \to \Power (Q)$ the transition
  function, $q_0 \in Q$ the initial state, and $F \subseteq Q$ the set of final states.

  Let $n\in\nat$.
  A \emph{run} of $\mathcal{A}$ on $w = a_0\dots a_{n-1}\in \Sigma^*$ is a sequence $q_0\dots q_n \in
  Q^*$ such that, for all $0\le i <n$,
  $q_{i+1} \in \delta (q_i, a_i)$. The run is \emph{accepting} if $q_n \in F$.
  The language
  $\Lang (\mathcal{A}) = \{ w \in \Sigma^* \mid \exists
  \textrm{ accepting run of $\mathcal{A}$ on $w$} \}$ is recognized by $\mathcal{A}$.

  The automaton $\mathcal{A}$ is total deterministic if $|\delta (q,a)| =1$, for all $q\in Q$, $a\in\Sigma$.
\end{definition}

\section{Enhanced Derivatives}
\label{sec:abstract-derivatives}

An operation on languages takes one or more languages as arguments and yields another language. 
In this section, we enhance the syntax and semantics of regular expressions with extra operations and consider
conditions for the existence of a syntactic derivative for such enhanced expressions. Many examples
can be drawn from the closure properties of regular languages.
\begin{definition}
  A function $f : (\Sigma^*)^n \to \Sigma^*$ is \emph{regularity-preserving} if for all regular
  languages $R_1, \dots, R_n$ the image $f (R_1, \dots, R_n)$ is a regular language.
\end{definition}
\begin{example}\label{example:regularity-preserving-operators}
  We give a range of examples for operators on languages. All
  operators,  except shuffle closure,  are
  regularity-preserving. Proofs may be found in textbooks on formal
  languages unless otherwise indicated. We let $U, V, L \subseteq \Sigma^*$
  range over regular languages; $a, b \in \Sigma$ range over symbols.
  \begin{enumerate}
  \item The intersection $U \cap V$ and the complement $\neg U$  of
    regular languages are regular.
  \item The shuffle of two regular languages is defined by
    $U\|V = \bigcup \{u \| v \mid u\in U, v \in V\}$ where $\varepsilon \| v = \{v\}$, $u \|
    \varepsilon = \{u\}$, and $au \| bv = \{a\} \cdot (u \| bv) \cup
    \{b\} \cdot (au \| v)$, is regular.\\
    The shuffle closure operation
    $L^\| = \{\varepsilon\} \cup L \cup (L \| L) \cup (L \| L \| L)
    \cup \dots$ does \textbf{not} preserve regularity.
  \item The inverse homomorphism, i.e., $h^{-1} (U) = \{ w \in
    \Sigma^* \mid h (w) \in U \}$ is regular for a 
    function $h : \Sigma \to \Sigma^*$ that is extended  
    homomorphically to a function $\Sigma^* \to \Sigma^*$ (for simplicity, we do not consider
    homomorphisms between different alphabets, which can be simulated by using the disjoint union of
    the alphabets).\\
    The non-erasing homomorphism $h (L) = \{ h (w) \mid w \in L \}$ is
    regular where $h: \Sigma \to \Sigma^+$.
  \item  The language of every $k$-th symbol starting from position
    $i$ from words in a regular language $L$ is regular:
    for $k>0$ and $0<i\le k$ \\
    $f_{i,k} (L) = \{ a_ia_{i+k}a_{i+2k}\cdots a_{i+k\lfloor
      (n-i)/k\rfloor} \mid n\in\nat,
    a_1\dots a_n \in L \}$.
  \item The left quotient $\LQ{}{}$ and the right quotient $\RQ{}{}$
    of regular languages are regular.
  \item Functions $\textit{suffixes} (L) = \Sigma^* \setminus L$
    and  $\textit{prefixes} (L) = L / \Sigma^*$ preserve regularity.
  \item The function $\textit{reverse} (L) = \{ a_n \cdots a_1 \mid n,i\in\nat, 1\le i
    \le n, a_i \in \Sigma,  a_1 \dots a_n \in L \}$ preserves regularity.
  \item For each $k\in\nat$, the function $\mathbb{H}_k (L) = \{ v \mid v\in\Sigma^*,
    \exists u \in L. d (u, v) \le k \}$ is regularity preserving where the
    \emph{Hamming distance} of words $a_1\cdots
    a_n$ and $b_1\cdots b_m$ is defined by $h = d (a_1\cdots
    a_n,b_1\cdots b_m)$. If $m = n$, then $h = |\{ i \mid 1\le i
    \le n, a_i \ne b_i\}|$. Otherwise $h = \infty$.
    
    Analogously, $\mathbb{L}_k (L)$ is a regularity preserving
    approximation that uses the Levenshtein distance (see \cite{DBLP:conf/lata/ChamparnaudJM12}).
  \item The tilde and bar operators defined by $\tilde{L} = L \cup
    \{\varepsilon\}$ and $\bar{L} = L \setminus \{\varepsilon\}$
    preserve regularity (they are the
    primitive building blocks of multi-tilde-bar expressions
    \cite{DBLP:conf/wia/CaronCM12}, which we do not consider
    to save space).
  \end{enumerate}
\end{example}
The notion of a nullable expression is an important ingredient in the definition of the derivative
(Definition~\ref{definition:brzozowski-derivative}).  Nullability can be computed by induction on a regular
expression because each regular operator corresponds to a boolean function on the nullability of the
operator's arguments. The following definition imposes exactly this condition on the extra
operators in regular expressions. 
\begin{definition}
  A function $f : (\Sigma^*)^n \to \Sigma^*$ is \emph{$\varepsilon$-testable}, if there is a
  boolean function $B_f : \bool^n \to \bool$ such that $\varepsilon \in f (L_1, \dots, L_n)$ iff
  $B_f ((\varepsilon \in L_1), \dots, (\varepsilon \in L_n))$.
\end{definition}
\begin{example}
  Some of the functions from Example~\ref{example:regularity-preserving-operators} are
  $\varepsilon$-testable.
  \begin{enumerate}
  \item intersection, complement: $B_\cap = \wedge$, $B_\neg = \neg$;
  \item shuffle: $B_\| = \wedge$;
    the shuffle closure operation
    $L^\|$ is $\varepsilon$-testable
    using $B^\| (b) = \True$;
  \item inverse homomorphism: $B_{h^{-1}} (b) = b$, for $b\in\bool$;
    homomorphism $h$: if $h$ is non-erasing, then $B_h (b) = b$; erasing
    homomorphism is not $\varepsilon$-testable: consider $L_1 = \{ a
    \}$, $L_2 = \{ b \}$, and an erasing homomorphism $h$
    defined by $h(a)=\varepsilon$ and $h(b)=b$. Thus, $h (L_1) =
    \{\varepsilon\}$ and $h (L_2) = \{b\}$. If there was a boolean
    function $f_h$ to vouch for $\varepsilon$-testability of $h$, then $L_1$
    shows that $f_h (\False) = \True$ and $L_2$ yields $f_h (\False) =
    \False$, a contradiction.
  \item $k$-th letter extraction:
    $\varepsilon \in f_{i,k} (L)$ if $\exists w \in L$ such that $|w|<i$, so $f_{i,k}$ is not
    $\varepsilon$-testable. To see this let $i=k=2$, $L_1 = \{a\}$, and
    $L_2 = \{aa\}$ and assume that $B_f$ is the boolean function
    required for $\varepsilon$-testability.  Now $f_{2,2} (L_1) =
    \{\varepsilon\}$ and $f_{2,2} (L_2) = \{a\}$, so that $B_f
    (\False) = \True$ (by $L_1$) and $B_f (\False) = \False$ (by
    $L_2$), a contradiction.
  \item The left quotient is not $\varepsilon$-testable because
    $\varepsilon\in \LQ{U}{W}$ iff $U\cap W \ne \emptyset$:
    consider $U = \Sigma^*$ with $a\in\Sigma$, $W_1 = \emptyset$, and $W_2 = \{a\}$ so that
    $\LQ{U}{W_1} = \emptyset$ and $\LQ{U}{W_2} = \{\varepsilon, a\}$. A binary boolean
    function $B_\setminus$ for $\varepsilon$-testability would have to
    satisfy  $B_\setminus (\False, \False) = \False$ (for $W_1$) and
    $B_\setminus (\False, \False) = \True$ (for $W_2$), a contradiction.
    The same reasoning applies, mutatis mutandis, to the right quotient.
  \item The \textit{suffixes} function is not $\varepsilon$-testable
    by the proof for the left quotient. The proof for
    \textit{prefixes} is analogous to the one for the right quotient.
  \item The \textit{reverse} function is $\varepsilon$-testable: $B_{\textit{reverse}} (b) = b$.
  \item The approximation for Hamming distance is
    $\varepsilon$-testable by $B_{\mathbb{H}_k} (b) = b$ . The approximation for
    Levenshtein distance $\mathbb{L}_k$ is not $\varepsilon$-testable
    for $k>0$. The argument here is similar as for erasing
    homomorphism because a word at distance $k$ from a given word $w$ may be up to $k$
    symbols shorter than $w$.
  \item The tilde and bar operators are obviously
    $\varepsilon$-testable with the constants $\Rempty$ and $\Rnull$, respectively.
  \end{enumerate}
\end{example}
\begin{definition}[Enhanced regular expression]\label{def:regular-augmentation}
  Let $\calF \supseteq \calR$ be a ranked alphabet, an \emph{enhanced
    regular alphabet}. Let further  $\calJ$ be an interpretation
  function for $\calF$ on the carrier $\Power (\Sigma^*)$ extending
  the regular interpretation $\calI$ from 
  Definition~\ref{def:regular-expression}.
  The set of \emph{$\calF$-regular expressions over a set $X$} is the set of terms
  $T_{\calF} (X)$.
  For  $t\in T_{\calF} (X)$ we define its language $\L (t) = \hcalJ
  (t)$.
  The resulting $\calF$-algebra $(\Power(\Sigma^*), \calJ)$ is
  a \emph{regular enhancement} if every symbol
  $F \in\calF^{(n)}$ is interpreted by a regularity-preserving function $\calJ (n) (F)$.
\end{definition}
\begin{example}
  To extend regular expressions with a shuffle operator, consider
  $\calF^\| = \calR \cup \{ \| \}$ with $\Arity{\|} = 2$.

  To extend expressions with $k$th-letter extraction, we consider
  $\calF^{x-k} = \calR \cup \{ f_{i,k} \mid 0<i\le k \}$ with
  $\Arity{f_{i,k}} = 1$.
\end{example}
\begin{lemma}\label{lemma:extended-nullability}
  If $\calJ (F)$ is $\varepsilon$-testable, for each $F\in\calF$, then the nullability function $N$
  can be extended to $\calF$.
\end{lemma}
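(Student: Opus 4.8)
The plan is to define the extended nullability function $N : T_\calF \to \{\Rnull, \Rempty\}$ by structural induction on terms, reusing the existing clauses for the operators of $\calR$ and adding one clause per new operator, and then to verify by a second structural induction that the extended $N$ still detects nullability, i.e., that the analogue of Lemma~\ref{lemma:nullable} holds: $N(t) = \Rempty$ iff $\varepsilon \in \L(t)$, for every $t \in T_\calF$.

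For the definition, fix for each $F \in \calF^{(n)}$ a boolean function $B_F : \bool^n \to \bool$ witnessing $\varepsilon$-testability of $\calJ(n)(F)$; this is the only place the hypothesis is used, and since $B_F$ need not be unique, any choice will do. Exploiting the deliberate overloading of $\Rnull,\Rempty$ as both booleans and regular expressions --- under which $\odot$ restricted to $\{\Rnull,\Rempty\}$ is conjunction and $\oplus$ is disjunction --- we simply set $N(F(t_1,\dots,t_n)) = B_F(N(t_1),\dots,N(t_n))$, which lands in $\{\Rnull,\Rempty\}$ because $\bool = \{\Rnull,\Rempty\}$. For the symbols of $\calR$ this prescription is exactly the original definition of $N$ (taking $B_{\cdot}={\wedge}$, $B_{+}={\vee}$, $B_{*}$ constantly $\Rempty$, $B_{\Rempty}=\Rempty$, $B_{\Rnull}=\Rnull$, and $B_x=\Rnull$ for $x\in\Sigma$), so the new $N$ genuinely extends the old one.

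For correctness, proceed by structural induction on $t \in T_\calF$. The cases where the head symbol lies in $\calR$ are handled exactly as in the proof of Lemma~\ref{lemma:nullable}. For $t = F(t_1,\dots,t_n)$ with $F \in \calF^{(n)}$, unfold $\L(t) = \hcalJ(F(t_1,\dots,t_n)) = \calJ(n)(F)(\L(t_1),\dots,\L(t_n))$; apply $\varepsilon$-testability of $\calJ(n)(F)$ to obtain $\varepsilon \in \L(t)$ iff $B_F((\varepsilon \in \L(t_1)),\dots,(\varepsilon \in \L(t_n)))$; rewrite each premise $(\varepsilon \in \L(t_i))$ as $N(t_i)=\Rempty$ using the induction hypothesis; and conclude $\varepsilon \in \L(t)$ iff $B_F(N(t_1),\dots,N(t_n)) = \Rempty$, which is $N(t) = \Rempty$ by definition.

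I expect no serious obstacle here: the argument is a routine induction once the right definition is in place. The only points requiring care are notational --- keeping straight the identification of the boolean carrier $\bool$ with the two constant regular expressions $\{\Rnull,\Rempty\}$, and the matching identification of $\odot,\oplus$ with $\wedge,\vee$ on that set --- together with a sanity check that the new clause, instantiated at the regular operators, reproduces the definition of $N$ verbatim, so that ``extension'' is meant literally rather than only up to semantic equivalence.
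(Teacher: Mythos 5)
Your proposal is correct and takes essentially the same route as the paper: the paper's proof simply sets $N(F(t_1,\dots,t_n)) := B_f(N(t_1),\dots,N(t_n))$ using the boolean function from $\varepsilon$-testability, exactly as you do. The only difference is that you additionally spell out the correctness induction (the analogue of Lemma~\ref{lemma:nullable}) and the check that the new clause specializes to the original definition on $\calR$, both of which the paper leaves implicit.
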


To obtain syntactic derivability for an enhanced regular expression, it must be possible to express
the derivative of an operator in terms of a regular expression that applies the derivative to the
arguments of the operator.
We first define a suitable property semantically as an algebraic property of a regular enhancement.
\begin{definition}\label{def:left-closed}
  Let $\calF$ be an enhanced regular alphabet and $\calJ$ an extension
  of the regular interpretation $\calI$.
  The $\calF$-algebra $(\Power (\Sigma^*), \calJ)$ is \emph{left derivable} if,
  for each $F\in \calF^{(k)}$  and $a\in\Sigma$, there exists
  a finite subset $X \subset \{ x_{v,j} \mid v \in \Sigma^*, 1\le j\le k\}$ and
  an $\calF$-regular expression $r \in T_{\calF} (X)$
  such that, for all  $L_1, \dots, L_k \subseteq \Sigma^*$
  the left
  quotient 
  $\LQ{a} (\calJ(F) (L_1, \dots, L_k))$ can be expressed as $\hcalJ (r)$ using the interpretation $\calJ_0 (x_{v,j}) = \LQ{v}{L_j}$.
\end{definition}

\begin{example}
  We revisit the previous examples of functions on languages and examine them for being left derivable.
  \begin{enumerate}
  \item Intersection is left derivable: $\LQ{a} (L_1 \cap L_2) = \hcalJ (x_{a,1} \cap x_{a,2}) = \LQ{a}L_1 \cap
    \LQ{a}L_2$. For negation $\neg$, the pattern is the same.
  \item Shuffle is left derivable:\\
    $\LQ{a} (L_1 \| L_2) = (\LQ{a}L_1) \| L_2 \cup L_1 \| (\LQ{a}L_2) = \hcalJ
    (x_{a,1} \| x_{\epsilon,2} + x_{\epsilon,1} \| x_{a,2})$;\\
    shuffle closure is also left derivable:
    \begin{displaymath}
      \LQ{a}{L^\|}
      =(\LQ{a}{L}) \| L^\|
      = \hcalJ(x_{a,1} \| x_{\epsilon,2} + x_{\epsilon,1}^\|)
    \end{displaymath}
  \item Inverse homomorphism is left derivable:\\
    $\LQ{a} (h^{-1} (L)) = h^{-1} (\LQ{h (a)} L) = \hcalJ (h^{-1} (x_{h
      (a), 1}))$. \\
    Non-erasing homomorphism is left derivable:\\
    $\LQ{a} (h (L)) = \bigcup_{b\in\Sigma, h (b) = av} v \cdot h (\LQ{b}{L}) = \hcalJ
    (\sum_{b\in\Sigma, h (b) = av} v\cdot h(x_{b,1})) $.
  \item For $k>1$, the set  $ \{ f_{i,k} \mid 0<i\le k\}$ is left derivable.\\
    $\LQ{a} (f_{i,k} (L)) = f_k (\bigcup_{|w|=i-1}\LQ{wa}L) = \hcalJ (\sum_{|w|=i-1} f_k (
    x_{wa,1}))$.
  \item The left and right quotients are left derivable.\\
    $\LQ{a} (\LQ{L_1} L_2)  = \LQ{(L_1\cdot a)} L_2 = \hcalJ (\LQ{(x_{\varepsilon,1}\cdot a)}
    x_{\varepsilon,2})$.\\
    $\LQ{a} (\RQ{L_1}{L_2}) = \RQ{(\LQ{a}L_1)}{L_2} = \hcalJ (\RQ{x_{a,1}}{x_{\varepsilon, 2}})$.
  \item The function \textit{suffixes} is not left derivable because
    $\LQ{a}{\mathit{suffixes} (L)} = \{ w \mid \exists u. u a w \in L
    \} = \LQ{(\Sigma^*\cdot a)}{L}$ cannot be finitely expressed using just derivatives, the
    \textit{suffixes} function, and the regular operators.

    To see this, consider the family of languages $L_n = w_n^*$ where
    $w_n =(abab^2\cdot ab^n)^*$, for all $n\in\nat$, and find that
    \begin{align*}
      L_n' = \LQ{a}{\mathit{suffixes} (L_n)}
      &= bab^2\cdot ab^nw_n^* + b^2\cdot ab^nw_n^* + \dots + b^nw_n^*
    \end{align*}
    Suppose there is a \textit{suffixes}-enhanced regular expression
    $r$ for $\LQ{a}{L}$ that only depends on $a$ and $\Sigma$ and that refers to
    finitely many derivatives, say, $\LQ{v_1}{L}, \dots,
    \LQ{v_m}{L}$. Considering $r$ for $L_n'$, we find that $r$ cannot contain the \textit{suffixes}
    function because that would introduce words starting with $a$,
    which cannot be in $L_n'$ and which cannot be amended by
    prepending a fixed word without breaking the $a$-$b$ pattern.
    There must exist some  $v \in w_n^*$ such that
    each $v_j$ is either a prefix of $v$ that ends with an $a$ or it
    is not a prefix of $v$.

    Now, if we consider $L_{k}$ where $k=\max (n, |v_1|, \dots,
    |v_m|)+1$ then none of the $\LQ{v_j}{L_k}$ can contain
    $b^kw_k^*$. Note that if $v_j$ is not a prefix of $w_n*$, then it
    is not a prefix of $w_k^*$, for any $k\ge n$, either.
    Hence, $r$ cannot describe $L_k'$.
    
    If we assume that $\calF$ contains
    \textit{suffixes} and the left quotient operator, then we could consider $\textit{suffixes}(L)$
    as an abbreviation for $\LQ{\Sigma^*}{L}$ and we would regain
    left derivability. Furthermore, with a suitable variation of
    Definition~\ref{def:left-closed}, \textit{suffixes} is \emph{right derivable}:\\ 
    $\RQ{\textit{suffixes} (L)}{a}  = \RQ{\{v \mid \exists u. uv \in L\}}{a} = \{v \mid \exists u. uva
    \in L\} = \textit{suffixes} (\RQ{L} a)$.\\
    The function \textit{prefixes} is left derivable: \\
    $\LQ{a}{\textit{prefixes} (L)} = \LQ{a}{\{v \mid \exists u. vu \in L\}} = \textit{prefixes}
    (\LQ{a}{L})  = \hcalJ (\textit{prefixes} (x_{a,1}))$.
  \item The function \textit{reverse} is neither left derivable nor right derivable, but swaps between left and right
    quotients:\\
    $\LQ{a}{\textit{reverse} (L)} = \textit{reverse} (\RQ{L}{a})$.\\
    To see that \textit{reverse} is no left derivable, consider the
    language $L= b^*a$. Clearly, $\mathit{reverse} (L) = ab^*$ and
    $\LQ{a}{\mathit{reverse} (L)} = b^*$. Now suppose we can obtain
    $b^*$ by a regular expression with \textit{reverse} on arbitrary
    derivatives of $L$. There are only two distinct derivatives: $\LQ{a}{(b^*a)}
    = \{\varepsilon\}$ and $\LQ{b}{(b^*a)} = {b^*a}$. Hence, for any
    $w\in\{a,b\}^*$, $\LQ{w}{(b^*a)}$ will be either empty,
    $\{\varepsilon\}$, or $b^*a$. Now consider a language $U$
    constructed from these derivatives by application of regular
    operators or \textit{reverse}. It can be shown that any word in
    $U$ is either $\varepsilon$ or it contains the symbol $a$. Thus,
    $U$ cannot be equal to $b^*$.
  \item The enhancement with the approximation operators $\mathbb{H}_k, \mathbb{H}_{k-1}, \dots,
    \mathbb{H}_1, \mathbb{H}_0$ operators 
    (for Hamming distance) is left derivable because
    \begin{displaymath}
      \LQ{a}{\mathbb{H}_k (L)} = \mathbb{H}_k (\LQ{a}{L}) + \sum_{\substack{k>0 \\ x\ne a}} \mathbb{H}_{k-1} (\LQ{x}{L})
    \end{displaymath}
    For approximation with operators $\mathbb{L}_k, \dots, \mathbb{L}_0$ that rely on Levenshtein
    distance, we also obtain left closure (assuming that
    $\mathbb{L}_{-1} (L) = \emptyset$):
    \begin{displaymath}
      \LQ{a}{\mathbb{L}_k (L)} =
      \sum_{
        \substack{
          w \in \Sigma^*\\ 
          |w| \le k\\
          k>0
        }
      } \Big(
      \mathbb{L}_{k-|w|} (\LQ{wa}{L}) +
      \sum_{x\ne a} \mathbb{L}_{k-|w|-1} (\LQ{wx}{L}) +
      \mathbb{L}_{k-|w|-1} (\LQ{w}{L})
      \Big)
    \end{displaymath}
    The terms correspond to the actions ``delete $w$, then match $a$'', ``delete $w$, then
    substitute $a$ by some $x$'', and ``delete $w$, then insert $a$''.
  \item Tilde and bar are trivially left derivable: $\LQ{a} \tilde{L}
    = \LQ{a}L$ and $\LQ{a} \bar{L} = \LQ{a} L$.
  \end{enumerate}
\end{example}

\section{Word Problem}
\label{sec:semantic-characterization}

To obtain a decision procedure for the word problem of left derivable enhanced regular expressions,
we first define the corresponding syntactic derivative and then extend Brzozowski's result that $w
\in \L (r)$ iff $\varepsilon \in \L (D (w, r)) $ (which follows from
Theorem~\ref{theorem:word-derivative}). It is interesting to remark that, for example, we obtain a
decision procedure for the word problem for the language of regular expressions enhanced with the
shuffle-closure operator is no longer regular. 
\begin{theorem}\label{thm:left-closed-derivative}
  If $(\Power(\Sigma^*), \calJ)$ is a left derivable $\calF$-algebra which is $\varepsilon$-testable, then
  there is a syntactic derivative function
  $D : \Sigma \times T_{\calF} \to T_{\calF}$ such that
  $\hcalJ (D (a, t)) = \LQ{a}{\hcalJ (t)}$, for all $a\in\Sigma$ and
  $t\in T_\calF$.
\end{theorem}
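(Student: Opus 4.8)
The plan is to define $D$ by induction on the term structure of $t \in T_\calF$, reusing the clauses of Definition~\ref{definition:brzozowski-derivative} for the regular operators $\calR$ and adding a new clause for each enhanced operator $F \in \calF^{(k)} \setminus \calR$. For the regular part, the clauses are exactly Brzozowski's, and the correctness on those clauses is Theorem~\ref{theorem:word-derivative} (more precisely, its inductive step); the only subtlety is that the smart constructors $\odot, \oplus$ and the extended nullability function $N$ (available by Lemma~\ref{lemma:extended-nullability}, using $\varepsilon$-testability) must be used so that the result stays a well-formed $\calF$-expression and so that the similarity quotient is respected. For an enhanced operator $F$, left derivability (Definition~\ref{def:left-closed}) hands us, for each $a \in \Sigma$, a finite variable set $X \subset \{x_{v,j}\}$ and a template $r_{F,a} \in T_\calF(X)$ with the property that $\LQ{a}(\calJ(F)(L_1,\dots,L_k)) = \hcalJ(r_{F,a})$ under the assignment $\calJ_0(x_{v,j}) = \LQ{v}{L_j}$. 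I would then set
\begin{displaymath}
  D(a, F(t_1,\dots,t_k)) = r_{F,a}\big[\, x_{v,j} := D(v, t_j) \,\big],
\end{displaymath}
i.e. substitute for each variable $x_{v,j}$ occurring in the template the iterated syntactic derivative $D(v, t_j)$, where $D(v,\cdot)$ for a word $v$ is the obvious iteration of $D$ on single symbols (this is a nested recursive call, but on strictly smaller subterms $t_j$, so the definition is well-founded by induction on term size with the word-derivative iteration treated as an inner, separately-justified recursion).

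Then I would prove the correctness statement $\hcalJ(D(a,t)) = \LQ{a}{\hcalJ(t)}$ by structural induction on $t$. The base cases ($\Rnull$, $\Rempty$, symbols, and variables if we were in $T_\calF(X)$, though here $t \in T_\calF$ so no variables) and the inductive cases for $\cdot$, $+$, $*$ are the classical Brzozowski computations; they go through verbatim once one knows, inductively, that $\hcalJ(D(a,t_i)) = \LQ{a}{\hcalJ(t_i)}$ and that $N(t_i)$ correctly decides nullability of $\hcalJ(t_i)$. For the enhanced case $t = F(t_1,\dots,t_k)$, the key computation is
\begin{displaymath}
  \hcalJ(D(a, F(t_1,\dots,t_k))) = \hcalJ\big(r_{F,a}[x_{v,j} := D(v,t_j)]\big)
  = \hat{\calJ}_{\sigma}(r_{F,a}),
\end{displaymath}
where $\sigma$ is the assignment $\sigma(x_{v,j}) = \hcalJ(D(v,t_j))$; by the (iterated) induction hypothesis $\hcalJ(D(v,t_j)) = \LQ{v}{\hcalJ(t_j)}$, so $\sigma$ coincides with the interpretation $\calJ_0$ that left derivability requires, with $L_j := \hcalJ(t_j)$; hence $\hat{\calJ}_\sigma(r_{F,a}) = \LQ{a}(\calJ(F)(\hcalJ(t_1),\dots,\hcalJ(t_k))) = \LQ{a}{\hcalJ(F(t_1,\dots,t_k))}$, which is exactly what we want. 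I would state and use a small substitution lemma: for any $r \in T_\calF(X)$ and assignment of closed terms $u_x$ to the variables $x \in X$, $\hcalJ(r[x := u_x]) = \hat{\calJ}_\sigma(r)$ where $\sigma(x) = \hcalJ(u_x)$ — this is a routine compositionality fact about term interpretation, proved by induction on $r$.

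The main obstacle, and the point needing genuine care rather than routine bookkeeping, is the well-foundedness of the definition of $D$: the enhanced clause calls $D(v, t_j)$ for words $v$ ranging over a finite set determined by the template, which in turn unfolds into nested single-symbol derivatives $D(a', \cdot)$ applied to subterms of $t_j$. Since $t_j$ is a proper subterm of $F(t_1,\dots,t_k)$, every recursive call is on a strictly smaller term, so termination follows by induction on term size — but one has to phrase the recursion so that the "iterate over the word $v$" step does not itself re-enter at the same term size. I would handle this by first defining $D$ on single symbols with term size as the well-founded measure, then extending to words $D(\varepsilon,t)=t$, $D(a w, t) = D(w, D(a,t))$ as a derived notion, and only afterwards noting that the enhanced clause, though written using $D(v,t_j)$, is legitimate because each such call bottoms out at single-symbol derivatives of strict subterms. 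A secondary point is to check that $D(a,t)$ is well-defined on the similarity quotient $R = T_\calR/{\equiv}$ (and its enhancement) — i.e. that $\equiv$-equivalent expressions have $\equiv$-equivalent derivatives — which for the regular fragment is classical and for the enhanced operators is immediate because $D$ is defined purely syntactically on representatives and the templates $r_{F,a}$ are fixed.
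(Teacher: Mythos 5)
Your proposal is correct and takes essentially the same route as the paper's proof: extend Brzozowski's clauses with $D(a, F(t_1,\dots,t_k)) = r_{F,a}[x_{v,j} \mapsto D(v,t_j)]$, obtain the extended nullability function $N$ from $\varepsilon$-testability via Lemma~\ref{lemma:extended-nullability}, and establish $\hcalJ(D(a,t)) = \LQ{a}{\hcalJ(t)}$ by structural induction on $t$, invoking left derivability in the $F$-case. You are in fact more explicit than the paper about the compositionality (substitution) lemma and about the well-foundedness of the recursion through the iterated calls $D(v,t_j)$, which the paper's one-line proof passes over in silence.
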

\begin{proof}
  Define $D$ inductively as an extension of
  Definition~\ref{definition:brzozowski-derivative} for $F\in \calF\setminus\calR$: 
  \begin{align*}
    D (a,F (r_1, \dots, r_n)) &= R (F, a)[x_{v,j} \mapsto D (v, r_j) \mid x_{v,j} \in X (F,a) ]
  \end{align*}
  where $N$ extends to $T_{\calF}$  by Lemma~\ref{lemma:extended-nullability} and where $D$ extends
  to words as before.
\begin{align*}
    D (\varepsilon, r) &= r & D (aw, r) &= D (w, D (a, r))
  \end{align*}
  The statement about the semantics follows by induction on the augmented term using the definition
  of left derivability.
  \qed
\end{proof}

\begin{theorem}
  If $(\Power(\Sigma^*), \calJ)$ is a left derivable $\calF$-algebra which is $\varepsilon$-testable, then
  the word problem for $\hcalJ (t)$ is decidable, for any $t \in T_{\calF}$.
\end{theorem}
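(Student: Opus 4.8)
The plan is to reduce the word problem ``$w \in \hcalJ(t)$?'' to two already-available ingredients: the syntactic derivative of Theorem~\ref{thm:left-closed-derivative} and the extended nullability test of Lemma~\ref{lemma:extended-nullability}. First I would recall the classical Brzozowski observation: for any word $w = a_1 \cdots a_n$ and any term $t$, we have $w \in \hcalJ(t)$ iff $\varepsilon \in \LQ{w}{\hcalJ(t)}$, since $\LQ{w}{U} = \{ v \mid wv \in U\}$ and taking $v = \varepsilon$ gives exactly the membership of $w$. So the algorithm on input $w$ and $t$ is: compute $D(w, t) = D(a_n, D(a_{n-1}, \ldots, D(a_1, t) \ldots)) \in T_\calF$ by $n$-fold application of the syntactic derivative, then return \emph{yes} iff $N(D(w,t)) = \Rempty$.

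Next I would argue correctness. By Theorem~\ref{thm:left-closed-derivative}, each application of $D(a, \cdot)$ is semantically exact: $\hcalJ(D(a,t)) = \LQ{a}{\hcalJ(t)}$. Iterating this $n$ times along $w$ (a straightforward induction on $|w|$, using $\LQ{aw'}{U} = \LQ{w'}{(\LQ{a}{U})}$) gives $\hcalJ(D(w,t)) = \LQ{w}{\hcalJ(t)}$. Combining with the Brzozowski observation above, $w \in \hcalJ(t)$ iff $\varepsilon \in \hcalJ(D(w,t))$. Finally, since the enhancement is $\varepsilon$-testable, Lemma~\ref{lemma:extended-nullability} extends $N$ to all of $T_\calF$, and by (the extension of) Lemma~\ref{lemma:nullable} we have $N(D(w,t)) = \Rempty$ iff $\varepsilon \in \hcalJ(D(w,t))$. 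Chaining the equivalences yields $w \in \hcalJ(t)$ iff $N(D(w,t)) = \Rempty$, which is exactly what the algorithm computes.

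It remains to check effectivity, i.e.\ that this is really a \emph{decision procedure} and not merely a characterization. The syntactic derivative $D(a, F(r_1,\dots,r_n))$ is defined by substituting the finitely many terms $D(v, r_j)$ into the fixed finite expression $R(F,a)$ over the finite variable set $X(F,a)$; since left derivability is a property supplied with the algebra, both $R(F,a)$ and $X(F,a)$ are given as concrete finite data, so a single derivative step is a terminating syntactic manipulation, and any particular $D(v, r_j)$ needed is itself computed by $|v|$ recursive derivative steps. Thus $D(w, t)$ is a computable function of $w$ and $t$ producing a term of $T_\calF$ in finitely many steps. The nullability function $N$ on $T_\calF$ is defined by structural recursion using the boolean functions $B_F$, hence computable on any term. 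Therefore the overall procedure terminates on every input and returns the correct answer, establishing decidability.

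The main obstacle — and really the only subtle point — is the effectivity discussion in the last paragraph: one must be careful that the recursion defining $D$ on enhanced terms genuinely terminates even though deriving an operator node $F(r_1,\dots,r_n)$ may require derivatives $D(v, r_j)$ with $|v| > 1$. The resolution is that $|v|$ is bounded by the (fixed, finite) data $X(F,a)$ that comes with left derivability, so although the word length can grow inside the recursion, the structural size of the argument terms $r_j$ strictly decreases, giving a well-founded induction on term structure; this should be spelled out carefully, and it is also worth remarking that no bound on the \emph{size} of $D(w,t)$ is needed for decidability — we only need that it is computable — which is fortunate, since for operators like shuffle-closure the language $\hcalJ(t)$ need not be regular and the set of iterated derivatives may well be infinite.
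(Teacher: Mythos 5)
Your main argument is exactly the paper's proof: by Theorem~\ref{thm:left-closed-derivative} and Lemma~\ref{lemma:extended-nullability} one gets, by induction on $|w|$, that $w \in \hcalJ(t)$ iff $\varepsilon \in \hcalJ(D(w,t))$ iff $N(D(w,t)) = \Rempty$, and the decision procedure is to compute $D(w,t)$ and test $N$. At the level of detail the paper itself works at, your proof is complete and correct.

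One caveat about the effectivity paragraph you add (the paper is silent on this point): your termination argument --- ``the structural size of the argument terms $r_j$ strictly decreases, giving a well-founded induction on term structure'' --- does not close as stated. When $X(F,a)$ contains a variable $x_{v,j}$ with $|v|\ge 2$, the call $D(v,r_j)$ unfolds to $D(v'',D(b,r_j))$, and the outer call is applied to the \emph{derived} term $D(b,r_j)$, which is neither a subterm of $F(r_1,\dots,r_n)$ nor in general structurally smaller, so plain structural induction does not reach it. Termination does hold for every operator in the paper, because there the defining expressions $R(F,a)$ apply enhanced operators only directly to variables, so word derivatives telescope (e.g.\ $D(v,h^{-1}(r)) = h^{-1}(D(h(v),r))$ reduces a word derivative of $h^{-1}(r)$ to a word derivative of the strictly smaller $r$ with a longer word, and word derivatives of purely regular terms always terminate). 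But it is not an automatic consequence of Definition~\ref{def:left-closed}: a definition-compliant shape such as $D(a,F(r)) = F(F(D(aa,r)))$ already makes the recursion re-enter itself on the same term. So either one restricts the form of $R(F,a)$ (depth-one nesting of enhanced operators, as in all the examples), or one must supply a separate well-foundedness argument; your remark that no bound on the \emph{size} of $D(w,t)$ is needed is correct and is the right observation for why non-regular cases like shuffle closure still yield a decision procedure.
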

\begin{proof}
  By Theorem~\ref{thm:left-closed-derivative}, there is a nullability function $N$ and a derivative
  $D$ for $T_{\calF}$. By induction on the length of $w \in \Sigma^*$, we obtain that $w
  \in \hcalJ (t)$ iff $\varepsilon \in \hcalJ (D (w, t))$ iff $N (D (w,t))$.
  \qed
\end{proof}

\section{Finiteness}

For classical derivatives on $T_\calR$ (cf.\
Definition~\ref{definition:brzozowski-derivative}), Brzozowski showed
that the set of iterated derivatives $D (\Sigma^*, r)$ of a given
regular expression $r$ is finite, 
when considered modulo similarity (i.e., associativity, commutativity, and idempotence of union). Hence, we now look
for conditions such that the set of dissimilar iterated derivatives is
finite for enhanced regular expressions. First, we  set up a
framework for reasoning about finiteness.

\begin{figure}[tp]
  \begin{displaymath}
    \begin{array}[t]{ll}
    D^+ (\Rnull) &= \{ \Rnull \} \\
    D^+ (\Rempty) &= \{ \Rnull \} \\
    D^+ (a) &= \{ \Rnull, \Rempty \} 
    \end{array}
    \qquad
    \begin{array}[t]{ll}
    D^+ (r+s) &= D^+ (r) \oplus D^+ (s) \\
    D^+ (r\cdot s) &= D^+ (r) \odot s \oplus \bigoplus D^+ (s) \\
    D^+ (r^*) &= \bigoplus D^+ (r) \odot r^*
    \end{array}
  \end{displaymath}
  \caption{Iterated Brzozowski derivatives for $T_\calR$}
  \label{fig:iterated-derivatives}
\end{figure}
Recent work on determining the number of iterated \emph{partial} derivatives starts with an inductive
definition for the set of iterated partial derivatives
\cite{broda:tr}. We transfer that definition to the classical case and
define an upper approximation $D^+ (r)$ of the set of iterated
derivatives of expression $r$ in
Figure~\ref{fig:iterated-derivatives} by induction on  $r$. In the definition, we lift $\odot$ and $\oplus$ to
sets of expressions (i.e., if $R, S \subseteq T_\calR$, then $R\odot S
= \{ r\odot s \mid r\in R, s\in S\}$ and $R\oplus S = \{ r \oplus s
\mid r\in R, s\in S$). We
further write $\bigoplus S$ for the set $\{ s_1 \oplus \dots \oplus s_n \mid n \in \nat, s_i \in S
\}$ of finite sums of elements drawn from $S$ where the nullary sum
stands for $\Rnull$ and where we assume sums to be identified modulo
associativity, commutativity, and idempotence to obtain the following results.\footnote{See the
  technical report for auxiliary lemmas and proofs.}
\begin{theorem}
  The set $D^+ (r)$ is finite, for all $r \in T_\calR$.
\end{theorem}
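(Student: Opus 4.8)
The plan is a structural induction on $r \in T_\calR$. The three base cases are immediate, since $D^+(\Rnull) = D^+(\Rempty) = \{\Rnull\}$ and $D^+(a) = \{\Rnull, \Rempty\}$ are finite by inspection.

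For the inductive step I would first isolate two closure facts about the set-lifted operators appearing in Figure~\ref{fig:iterated-derivatives}. The first, routine, fact: if $R, S \subseteq T_\calR$ are finite and $s \in T_\calR$, then $R \oplus S$, $R \odot S$, and $R \odot s$ are finite, with sizes at most $|R|\cdot|S|$ (resp.\ $|R|$); this is immediate from the definitions of the set-lifted $\oplus$ and $\odot$. The second, and the real content, is that $\bigoplus S$ is finite whenever $S$ is finite, in fact $|\bigoplus S| \le 2^{|S|}$. This rests on the standing assumption that sums are identified modulo associativity, commutativity, and idempotence of $\oplus$ (indeed $\oplus$ is $\NF$ applied to $+$): under this identification every element of $\bigoplus S$ equals $\bigoplus_{t \in T} t$ for some $T \subseteq S$ (with $T = \emptyset$ giving the nullary sum $\Rnull$), because reordering, regrouping, and deleting repeated summands do not change the value. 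Hence $T \mapsto \bigoplus_{t \in T} t$ is a surjection $\Power(S) \to \bigoplus S$, so $\bigoplus S$ is finite.

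Given these two facts, each recursive clause is a one-liner. For $r + s$: $D^+(r) \oplus D^+(s)$ is finite by the first fact applied to the induction hypotheses for $r$ and $s$. For $r \cdot s$: $D^+(r) \odot s$ is finite by the first fact and the hypothesis for $r$, $\bigoplus D^+(s)$ is finite by the second fact and the hypothesis for $s$, and their $\oplus$-combination is finite again by the first fact. For $r^*$: $D^+(r) \odot r^*$ is finite by the first fact and the hypothesis for $r$, and applying $\bigoplus$ to a finite set keeps it finite by the second fact; the argument goes through whichever way $\bigoplus D^+(r) \odot r^*$ is parenthesized.

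The only step I expect to require genuine care is the second fact: without the quotient by associativity, commutativity, and idempotence, $\bigoplus S$ is already infinite for every nonempty $S$, so the proof really does hinge on making that identification explicit and exhibiting the surjection from the finite power set. Everything else is bookkeeping — and although the resulting bound on $|D^+(r)|$ is a tower of exponentials in the size of $r$, that has no bearing on the finiteness claim.
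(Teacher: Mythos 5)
Your proof is correct and follows exactly the argument the paper intends: the paper itself defers the details to a technical report, but its remark that sums in $\bigoplus S$ are ``identified modulo associativity, commutativity, and idempotence to obtain the following results'' is precisely the hinge you identify, and your surjection $\Power(S)\to\bigoplus S$ giving $|\bigoplus S|\le 2^{|S|}$ is the right way to exploit it. The structural induction and the routine finiteness of the lifted $\oplus$ and $\odot$ are exactly as needed, including your observation that the $r^*$ clause is finite under either parenthesization.
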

Clearly, the set $D^* (r)  = \{r\} \cup D^+ (r)$ is also finite for all $r$.

\begin{theorem}[Closure under derivation]\label{theorem:closure-under-derivation}
  \begin{enumerate}
  \item For all $r$ and $a$, $D (a, r) \in D^+ (r)$.
  \item For all $r$ and $a$, if $t \in D^+ (r)$, then $D (a, t) \in D^+ (r)$.
  \end{enumerate}
\end{theorem}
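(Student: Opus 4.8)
The plan is to prove both parts simultaneously by structural induction on $r \in T_\calR$, since the two statements are naturally intertwined: part (1) is essentially the base case of the ``closure'' phenomenon, and part (2) captures that $D^+(r)$ is closed under further derivation. Before starting the induction I would isolate two routine monotonicity facts about the lifted operators. First, $\oplus$ and $\odot$ distribute over the set-level operations in the expected way, so that for $t \in R \oplus S$ we have $D(a,t) \in D(a,R) \oplus D(a,S)$ (using the definition of $D$ on $+$) and similarly $D(a, r \odot s) \in (D(a,r) \odot s) \oplus (N(r) \odot D(a,s))$. Second, and crucially, $\bigoplus S$ is idempotent in the sense that $S \subseteq \bigoplus S$, that $\bigoplus(\bigoplus S) = \bigoplus S$, and that $R \oplus \bigoplus S \subseteq \bigoplus(R \cup S)$ when $R \subseteq \bigoplus S$; these let me absorb extra summands that arise when differentiating a starred or concatenated subterm.

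For part (1), I would check each syntactic form against Figure~\ref{fig:iterated-derivatives}. The atomic cases $\Rnull, \Rempty, a$ are immediate from the definitions of $D$ and $D^+$. For $r+s$: $D(a, r+s) = D(a,r) \oplus D(a,s)$, and by induction $D(a,r) \in D^+(r)$, $D(a,s) \in D^+(s)$, so $D(a,r+s) \in D^+(r) \oplus D^+(s) = D^+(r+s)$. For $r \cdot s$: $D(a, r\cdot s) = D(a,r) \odot s \oplus N(r) \odot D(a,s)$; the first summand lies in $D^+(r) \odot s$ by induction, and $N(r) \odot D(a,s)$ is either $\Rnull$ or lies in $D^+(s)$, hence in $\bigoplus D^+(s)$, and the whole expression lands in $D^+(r) \odot s \oplus \bigoplus D^+(s) = D^+(r\cdot s)$. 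For $r^*$: $D(a, r^*) = D(a,r) \odot r^*$, which lies in $D^+(r) \odot r^* \subseteq \bigoplus D^+(r) \odot r^* = D^+(r^*)$.

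For part (2), I would take $t \in D^+(r)$ and argue by the same case analysis on the structure of $r$, tracking where $t$ came from. If $r = r_1 + r_2$, then $t = t_1 \oplus t_2$ with $t_i \in D^+(r_i)$; differentiating gives $D(a,t) \in D(a,t_1) \oplus D(a,t_2)$, and by the induction hypothesis for part (2), $D(a,t_i) \in D^+(r_i)$, so $D(a,t) \in D^+(r)$. If $r = r_1 \cdot r_2$, then $t$ has the shape $t_1 \odot r_2 \oplus u$ with $t_1 \in D^+(r_1)$ and $u \in \bigoplus D^+(r_2)$; differentiating the $t_1 \odot r_2$ part produces $D(a,t_1) \odot r_2 \oplus N(t_1) \odot D(a,r_2)$, where $D(a,t_1) \in D^+(r_1)$ by the part-(2) hypothesis and $D(a,r_2) \in D^+(r_2)$ by part (1), so this contributes something in $D^+(r_1) \odot r_2 \oplus \bigoplus D^+(r_2)$; differentiating $u$, a sum of elements of $D^+(r_2)$, stays in $\bigoplus D^+(r_2)$ by the part-(2) hypothesis applied to each summand plus the absorption lemma. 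The case $r = r_1^*$ is similar: an element of $D^+(r_1^*) = \bigoplus D^+(r_1) \odot r_1^*$ is a sum of terms $s_i \odot r_1^*$ with $s_i \in D^+(r_1)$, and differentiating each such term yields $D(a,s_i) \odot r_1^* \oplus N(s_i) \odot D(a,r_1^*)$, both pieces again living in $\bigoplus D^+(r_1) \odot r_1^*$.

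The main obstacle I anticipate is purely bookkeeping: making the absorption claims about $\bigoplus$ precise enough that the ``$\oplus$ of a $D^+$-element with a $\bigoplus D^+$-element is again a $\bigoplus D^+$-element'' steps go through cleanly, especially in the concatenation case where derivatives of both factors interleave. This is exactly the kind of lemma the footnote defers to the technical report, so I would state it once as an auxiliary monotonicity lemma about $\bigoplus$, $\oplus$, and $\odot$ on subsets of $T_\calR$ (identified modulo similarity), prove it by a short induction on the number of summands, and then invoke it uniformly; with that lemma in hand the two-part induction above is routine.
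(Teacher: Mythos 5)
Your proposal is correct and follows essentially the same route as the paper's own proof: structural induction on $r$ with the same case analysis, relying on the same auxiliary facts (that $D$ commutes with the smart constructors $\oplus$ and $\odot$, which the paper isolates as separate lemmas, and absorption/closure properties of $\bigoplus$ modulo similarity). Your handling of the concatenation and star cases in part (2) is, if anything, slightly more careful than the paper's sketch about the fact that the second component lives in $\bigoplus D^+(s)$ rather than $D^+(s)$, but the argument is the same.
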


\begin{corollary}
  The set $\{ D (w, r) \mid w\in \Sigma^+ \} \subseteq D^+ (r)$, for
  all $r$.
\end{corollary}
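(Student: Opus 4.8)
The plan is to obtain this as an immediate consequence of Theorem~\ref{theorem:closure-under-derivation} by induction on the length of $w$. The only preparatory step I would make explicit is the standard composition property $D (uv, r) = D (v, D (u, r))$ for $u, v \in \Sigma^*$, which follows by a routine induction on $u$ from the word-extension clauses $D (\varepsilon, r) = r$ and $D (aw, r) = D (w, D (a, r))$ of Definition~\ref{definition:brzozowski-derivative}. Instantiating $v := a \in \Sigma$ gives $D (wa, r) = D (a, D (w, r))$, i.e.\ the last symbol of a word can be peeled off, which is the form I want.

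Then I would prove $D (w, r) \in D^+ (r)$ for all $w \in \Sigma^+$ by induction on $|w|$. For $|w| = 1$ we have $w = a$ for some $a \in \Sigma$, and $D (a, r) \in D^+ (r)$ is precisely part~1 of Theorem~\ref{theorem:closure-under-derivation}. For $|w| \ge 2$, write $w = va$ with $v \in \Sigma^+$ and $a \in \Sigma$; the induction hypothesis gives $D (v, r) \in D^+ (r)$, and part~2 of Theorem~\ref{theorem:closure-under-derivation} applied to this element yields $D (a, D (v, r)) \in D^+ (r)$, which by the composition property equals $D (w, r)$. This establishes the claimed inclusion.

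I do not expect a genuine obstacle here; the proof is routine. The two points that merely require care are: (i) the induction should peel the symbol off the \emph{end} of $w$, so that part~2 is applied with the set $D^+ (r)$ held fixed throughout — peeling off the first symbol instead would force one to reason about $D^+$ of a derived expression and would require strengthening the statement first; and (ii) the base case must stay restricted to single symbols, since parts~1 and~2 together only concern nonempty words and $r$ itself need not lie in $D^+ (r)$ — which is exactly why the corollary quantifies over $\Sigma^+$ and the separate remark records $D^* (r) = \{r\} \cup D^+ (r)$ for the case $w = \varepsilon$.
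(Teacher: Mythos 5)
Your proof is correct and is exactly the routine argument the paper intends (the corollary is stated without an explicit proof precisely because it follows by this induction on $|w|$ from parts~1 and~2 of Theorem~\ref{theorem:closure-under-derivation}). Your two cautionary remarks --- peeling the symbol off the end of $w$ so that part~2 keeps $D^+(r)$ fixed, and restricting the base case to $\Sigma^+$ since $r$ itself need not lie in $D^+(r)$ --- are both apt.
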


To obtain finiteness for enhanced regular expressions, we strengthen the notion of left
derivability. Essentially, we restrict the form of a derivative to a
linear combination of enhancement functions applied to derivatives of
the arguments. 

\begin{definition}
  Let $\calF = \{ F_1, \dots, F_m \}$ be a ranked alphabet.  The $\calF$-algebra
  $(\Power (\Sigma^*), \calJ)$ is \emph{linear left derivable} if, for each $n\in\nat$,
  $F\in \calF^{(n)}$, and $a\in \Sigma$, there exists a finite index set $J$ such that, for each
  $j\in J$, there is a word $v_j \in \Sigma^*$, an index $i_j \in \{1,\dots, m \}$ of an element of
  $\calF$ with arity $\Arity (F_{i_j}) = n_j$, and, for $1\le k \le n_j$, words $w^j_k \in \Sigma^*$
  and indexes $\alpha^j_k \in \{ 1, \dots, n\}$ of left-hand-side languages,
  such that for all $L_1, \dots L_n \subseteq \Sigma^*$, the left quotient can be expressed by:
  \begin{align}
    \LQ{a} (\calJ (F) (L_1, \dots, L_n)) &=
    \bigcup_{j\in J} v_j\cdot \calJ(F_{i_j}) (\overline{\LQ{w^j_{k}}( L_{\alpha^j_k}))}^{k=1,\dots,n_j})
  \end{align}
\end{definition}
Of the standard regular operators, only union (and in fact all boolean functions) is linear left
derivable. Concatenation $U \cdot V$ does not fit the pattern because it has a summand which is
conditional on $\varepsilon \in U$. The Kleene star does not fit, either, because it concatenates
the derivative of the argument with the  original term
(Definition~\ref{definition:brzozowski-derivative}). But many useful operators are linear left 
derivable (Example~\ref{example:linear-left-derivable}). 

\begin{theorem}\label{theorem:finite-derivative}
  Suppose that $\calF = \{ F_1, \dots, F_m \} \cup \calR$ is an enhanced
  regular alphabet with interpretation $\calJ$ such that $(\Power
  (\Sigma^*), \calJ_{|\{ F_1, \dots, F_m \}})$ is linear left derivable.

  Then, for all $n\in\nat$, $F \in \calF^{(n)}$, and $a\in\Sigma$
  there exists a finite index set $J$, for each $j\in J$, there is a word $v_j \in \Sigma^*$, an
  index $i_j \in \{1,\dots, m \}$
  of an element of $\calF\setminus \calR$ with arity $\Arity (F_{i_j}) = n_j$,
  for each $1 \le k \le n_j$, a word $w^j_k  \in \Sigma^*$, and an index $\alpha^j_k \in \{ 1,
  \dots, n\}$ that   selects one of the 
  left-hand-side regular expressions as an argument. 
  Then, for each $r_1, \dots, r_n \in T_{\calF}$, the syntactic derivative of $F (r_1, \dots, r_n)$
  by $a$ is given in the form
  \begin{align}
    \label{eq:2}
    D (a, F (r_1, \dots, r_n)) &=
    \sum_{j\in J(F,a)} v_j\cdot F_{i_j} (\overline{D (w^j_{k}, r_{\alpha^j_k})}^{k=1,\dots,n_j})
  \end{align}

  In this setting, the set of iterated derivatives of any $\calF$-regular expression $r$ is
  finite. Specifically, in extension of the definition in Figure~\ref{fig:iterated-derivatives}, we claim that for
  each $F\in \calF\setminus \calR$, the set of iterated derivatives
  \begin{align}
    \label{eq:1}
    D^+ (F (r_1,\dots, r_n))
    &= 
    \bigoplus \{ v\cdot G  (\overline{r_i'}) \mid V \IsExtensionOrEqual v, G \in \calF\setminus\calR, r_i' \in \bigcup_j D^* (r_j)  \}
  \end{align}
  is finite.   
  Here $V = \{ v_j \mid j\in J, F \in \calF, a\in \Sigma \}$, and we write $V
  \IsExtensionOrEqual v$ for $\exists v'\in V. v'\IsExtensionOrEqual v$.
\end{theorem}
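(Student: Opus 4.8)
The plan is to prove the two assertions of Theorem~\ref{theorem:finite-derivative} in turn: first the closed form \eqref{eq:2} for the syntactic derivative of an enhanced operator, and then the finiteness of $D^+$ via the claimed formula \eqref{eq:1}. For the first part, the work is essentially bookkeeping: since $(\Power(\Sigma^*), \calJ_{|\{F_1,\dots,F_m\}})$ is linear left derivable, for each $F\in\calF^{(n)}\setminus\calR$ and $a\in\Sigma$ the index set $J(F,a)$, the words $v_j$, the operator indices $i_j$, the words $w^j_k$, and the argument indices $\alpha^j_k$ are all fixed data extracted from the definition of linear left derivability. I would define $D(a,F(r_1,\dots,r_n))$ to be the right-hand side of \eqref{eq:2} (with the understanding that $v_j \cdot (-)$ is realized by iterated smart concatenation with the symbols of $v_j$, and the outer sum by $\bigoplus$), and then invoke Theorem~\ref{thm:left-closed-derivative}: linear left derivability is a special case of left derivability, so $D$ is a legitimate syntactic derivative and $\hcalJ(D(a,F(r_1,\dots,r_n))) = \LQ{a}{\hcalJ(F(r_1,\dots,r_n))}$ follows from the displayed equation defining linear left derivability together with $\hcalJ(D(w,r_j)) = \LQ{w}{\hcalJ(r_j)}$.

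The substantial part is the finiteness claim. The strategy is to show, by induction on the structure of an arbitrary $r\in T_\calF$, that $D^+(r)$ as defined (extending Figure~\ref{fig:iterated-derivatives} by \eqref{eq:1} on enhanced operator nodes) is (a) finite and (b) closed under derivation, i.e.\ $D(a,r)\in D^+(r)$ and $t\in D^+(r) \Rightarrow D(a,t)\in D^+(r)$, which is the analogue of Theorem~\ref{theorem:closure-under-derivation}. For the $\calR$-cases this is exactly the existing theorem. For the new case $r = F(r_1,\dots,r_n)$: by the induction hypothesis each $D^*(r_j)$ is finite; the set $\{v\cdot G(\overline{r_i'}) \mid V\IsExtensionOrEqual v, G\in\calF\setminus\calR, r_i'\in\bigcup_j D^*(r_j)\}$ is then finite because $V$ is a finite set of words (a finite union over $F$, $a$ of finite index sets), hence $\{v \mid V\IsExtensionOrEqual v\}$ — the prefixes of elements of $V$ — is finite, $\calF\setminus\calR$ is finite, and each arity $n$ is bounded; taking $\bigoplus$ of a finite set (modulo similarity) stays finite. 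That gives finiteness of $D^+(F(r_1,\dots,r_n))$.

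For closure under derivation at an enhanced node, I would compute: $D(a,F(r_1,\dots,r_n))$ is by \eqref{eq:2} a sum of terms $v_j\cdot F_{i_j}(\overline{D(w^j_k, r_{\alpha^j_k})})$. Each $D(w^j_k, r_{\alpha^j_k})$ is an iterated derivative of some $r_j$, hence lies in $D^*(r_j) \subseteq \bigcup_j D^*(r_j)$ by the corollary to Theorem~\ref{theorem:closure-under-derivation} applied inductively (for $w^j_k=\varepsilon$ it is $r_j$ itself, which is in $D^*(r_j)$); and $v_j\in V$ so $V\IsExtensionOrEqual v_j$ trivially. Thus $D(a,F(r_1,\dots,r_n))\in D^+(F(r_1,\dots,r_n))$. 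For the second closure property, suppose $t\in D^+(F(r_1,\dots,r_n))$; then $t = \bigoplus_\ell v_\ell\cdot G_\ell(\overline{s^\ell_i})$ with $V\IsExtensionOrEqual v_\ell$ and $s^\ell_i\in\bigcup_j D^*(r_j)$. Applying $D(a,-)$ distributes over $\oplus$ and over smart concatenation: $D(a, v\cdot u)$ either (if $v = bv'$ with $b\ne a$) is $\Rnull$, or (if $v = av'$) is $v'\cdot u$ — still of the form $v''\cdot u$ with $v''$ a proper suffix of $v$, hence $V\IsExtensionOrEqual v''$ — or (if $v=\varepsilon$, with $v$ nonetheless a prefix of something in $V$, so $\varepsilon$ is in the prefix set and $V\IsExtensionOrEqual\varepsilon$ trivially) is $D(a,u)$ where $u = G_\ell(\overline{s^\ell_i})$; and $D(a, G_\ell(\overline{s^\ell_i}))$ is handled by \eqref{eq:2} applied to $G_\ell$, giving a sum of $v'_{j}\cdot F_{i'_j}(\overline{D(w,s^\ell_{\cdot})})$ where $v'_j$ ranges over $V$ again and each $D(w, s^\ell_i)$ is a further iterated derivative of some $r_j$, hence still in $D^*(r_j)$ by the inductive closure property. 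In every case the result stays inside $D^+(F(r_1,\dots,r_n))$. Finally, the corollary-style conclusion $\{D(w,r)\mid w\in\Sigma^+\}\subseteq D^+(r)$ together with finiteness of $D^+(r)$ gives finiteness of the set of iterated derivatives.

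The main obstacle I anticipate is not any single deep argument but the care needed in the bookkeeping for the closure step: one must track that the words $v$ appearing in elements of $D^+(F(\dots))$ always remain in the (finite) prefix-closure of $V$ under taking suffixes — i.e.\ that $D(a,-)$ applied to a prefix $v$ of some $v_\ell\in V$ never produces a word outside $\{v : V\IsExtensionOrEqual v\}$ — and that "opening up" an enhanced operator node $G(\overline{s_i})$ via \eqref{eq:2} reintroduces only words from $V$ and only further iterated derivatives of the original arguments $r_j$, never anything new. This requires being precise about how $\IsExtensionOrEqual$ (proper suffix) versus "prefix of an element of $V$" interact, and about the degenerate $\varepsilon$ cases. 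I would isolate a small auxiliary lemma stating that the set $\{v : V\IsExtensionOrEqual v\}$ is finite and closed under the operation $v\mapsto \LQ{a}{v}$ (for $v\ne\varepsilon$), which makes the induction go through cleanly, and otherwise the proof is a structural induction mirroring Theorem~\ref{theorem:closure-under-derivation}.
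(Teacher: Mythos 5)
Your proposal is correct and follows essentially the same route as the paper: finiteness of $D^+(F(r_1,\dots,r_n))$ from finiteness of $V$, of $\{v \mid V \IsExtensionOrEqual v\}$, of $\calF$, and of the $D^*(r_j)$, and closure under derivation by the three-way case analysis on the leading word $v$ (mismatched first symbol gives $\Rnull$, matched symbol strips to a suffix, $v=\varepsilon$ re-expands the operator node via \eqref{eq:2}). The only blemish is calling $\{v \mid V \IsExtensionOrEqual v\}$ the \emph{prefixes} of elements of $V$ --- by the definition of $\IsExtensionOf$ these are the suffixes, which is what your closure computation actually uses --- and this does not affect the argument.
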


\begin{corollary}\label{corollary:regular}
  Let $\calF$ be an enhanced regular alphabet and $(\Power (\Sigma^*),
  \calJ)$ be an $\varepsilon$-testable, linear left derivable
  $\calF$-algebra. Then any $\calF$-regular expression defines a
  regular language. 
\end{corollary}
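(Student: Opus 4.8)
The plan is to prove this by structural induction on the $\calF$-regular expression $r \in T_{\calF}$, showing that $\hcalJ(r) = \L(r)$ is regular. The base cases ($\Rnull$, $\Rempty$, symbols $x\in\Sigma$) and the classical inductive cases ($\cdot$, $+$, $*$) are immediate, since the regular operators preserve regularity and the induction hypothesis supplies regular languages for the subexpressions. The interesting case is $r = F(r_1,\dots,r_n)$ for $F\in\calF\setminus\calR$: here I cannot appeal directly to Definition~\ref{def:regular-augmentation} (which merely postulates regularity-preservation), because the whole point is to \emph{derive} it from linear left derivability. Instead, I will build a finite automaton for $\hcalJ(r)$ directly out of iterated derivatives, using the machinery of Section~\ref{sec:semantic-characterization} and Theorem~\ref{theorem:finite-derivative}.

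The key steps are as follows. First, since the algebra is $\varepsilon$-testable and linear left derivable — hence in particular left derivable — Theorem~\ref{thm:left-closed-derivative} gives a syntactic derivative $D:\Sigma\times T_{\calF}\to T_{\calF}$ with $\hcalJ(D(a,t)) = \LQ{a}\hcalJ(t)$, and Lemma~\ref{lemma:extended-nullability} gives a nullability test $N$ with $N(t)=\Rempty$ iff $\varepsilon\in\hcalJ(t)$. Second, Theorem~\ref{theorem:finite-derivative} tells us that the set $D^*(r) = \{r\}\cup D^+(r)$ of iterated derivatives of $r$, taken modulo similarity $\equiv$, is \emph{finite}; this is the crucial finiteness input. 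Third, I define an automaton $\mathcal{A}_r = (Q,\Sigma,\delta,q_0,F)$ in Brzozowski style: $Q = D^*(r)/{\equiv}$ (finite), $q_0 = [r]$, $\delta([t],a) = \{[D(a,t)]\}$ (well-defined on $\equiv$-classes because $D$ respects $\equiv$, and landing in $Q$ by closure under derivation, the analogue of Theorem~\ref{theorem:closure-under-derivation}), and $F = \{[t] \mid N(t) = \Rempty\}$. This $\mathcal{A}_r$ is a finite deterministic automaton. Fourth, a routine induction on word length shows that the unique run of $\mathcal{A}_r$ on $w$ ends in state $[D(w,r)]$, so $w\in\L(\mathcal{A}_r)$ iff $N(D(w,r))=\Rempty$ iff $\varepsilon\in\hcalJ(D(w,r)) = \LQ{w}\hcalJ(r)$ iff $w\in\hcalJ(r)$. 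Hence $\hcalJ(r) = \L(\mathcal{A}_r)$ is regular.

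There is a subtlety I should address: I have just argued that \emph{every} $\calF$-regular expression $r$ defines a regular language by the automaton construction, which actually subsumes the structural induction entirely — the automaton argument needs only that $D^*(r)/{\equiv}$ is finite, which Theorem~\ref{theorem:finite-derivative} delivers unconditionally for every $r\in T_{\calF}$. So the clean route is: invoke Theorem~\ref{theorem:finite-derivative} for finiteness of $D^*(r)$, invoke Theorem~\ref{thm:left-closed-derivative} and Lemma~\ref{lemma:extended-nullability} for the well-behaved $D$ and $N$, assemble $\mathcal{A}_r$, and verify $\L(\mathcal{A}_r) = \hcalJ(r)$. The main obstacle is the well-definedness and closure bookkeeping for $\delta$: one must check that $D$ sends $\equiv$-equivalent terms to $\equiv$-equivalent terms (so $\delta$ is a function on classes) and that $D(a,t)\in D^*(r)$ whenever $t\in D^*(r)$ (so $\delta$ stays inside $Q$) — both are the enhanced-expression analogues of Theorem~\ref{theorem:closure-under-derivation} and follow from the explicit shape of $D^+(F(r_1,\dots,r_n))$ in Equation~\eqref{eq:1} together with the classical argument for the $\calR$-fragment; everything else is routine.
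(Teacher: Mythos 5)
Your proposal is correct and, after discarding the initial structural-induction framing, lands on exactly the paper's argument: the Brzozowski-style automaton whose states are the (finitely many, by Theorem~\ref{theorem:finite-derivative}) dissimilar iterated derivatives of $r$, with transitions given by $D$ and final states determined by $N$. The well-definedness and closure bookkeeping you flag is real but routine; the paper simply asserts it.
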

\begin{proof}
  Let $r\in T_\calF$ and let $Q_r$ be the set of dissimilar derivatives of $r$. As $Q_r \subseteq
  D^* (r)$, $Q_r$ is finite. Hence $M = (Q_r, \Sigma, D, r, F)$ with $F = \{ q \in Q_r \mid N (q) \}$ is
  a total  deterministic finite automaton that recognizes $\L (r)$,
  which is thus regular.
  \qed
\end{proof}
\begin{corollary}
  Let $\calF$ be an enhanced regular alphabet and $(\Power (\Sigma^*),
  \calJ)$ be an $\varepsilon$-testable, linear left derivable
  $\calF$-algebra. Then, for each $F \in \calF$, the operation $\calJ (F)$ preserves
  regularity. 
\end{corollary}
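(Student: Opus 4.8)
The plan is to obtain this as an immediate consequence of Corollary~\ref{corollary:regular}. The only conceptual ingredient is that every regular language is the language of some ordinary regular expression (Kleene's theorem), together with the observation that ordinary regular expressions are a special case of $\calF$-regular expressions, since $\calR \subseteq \calF$ gives $T_\calR \subseteq T_\calF$, and since $\calJ$ extends $\calI$ the two interpretations agree on $T_\calR$.

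Concretely, I would fix $n\in\nat$, a symbol $F \in \calF^{(n)}$, and arbitrary regular languages $R_1, \dots, R_n \subseteq \Sigma^*$. By Kleene's theorem each $R_i$ is the language of some $r_i \in T_\calR$, that is $R_i = \hcalI (r_i)$; because $\calJ$ restricts to $\calI$ on $\calR$, we also have $\hcalJ (r_i) = R_i$, so the same terms $r_i$ may be reused as subterms of an $\calF$-regular expression. Form $t = F (r_1, \dots, r_n) \in T_\calF$. Unfolding the definition of the term interpretation $\hcalJ$ yields $\L (t) = \hcalJ (t) = \calJ (n) (F) (\hcalJ (r_1), \dots, \hcalJ (r_n)) = \calJ (n) (F) (R_1, \dots, R_n)$.

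Finally, since $(\Power (\Sigma^*), \calJ)$ is by hypothesis $\varepsilon$-testable and linear left derivable, Corollary~\ref{corollary:regular} applies to the $\calF$-regular expression $t$ and tells us that $\L (t)$ is a regular language. Combining this with the equation above, $\calJ (n) (F) (R_1, \dots, R_n)$ is regular. As $R_1, \dots, R_n$ were arbitrary regular languages, $\calJ (F)$ is regularity-preserving, which is the claim.

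I do not expect any genuine obstacle here; the statement is essentially a restatement of Corollary~\ref{corollary:regular} read at the level of the interpreted operators rather than the syntax. The only point that deserves (routine) attention is the bookkeeping that $\calI$ and $\calJ$ coincide on $T_\calR$, which is exactly what lets the witnesses $r_1, \dots, r_n$ for $R_1, \dots, R_n$ be spliced unchanged into the enhanced expression $F (r_1, \dots, r_n)$.
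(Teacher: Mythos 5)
Your proof is correct and follows exactly the paper's own argument: represent the regular arguments by ordinary regular expressions $r_1,\dots,r_n \in T_\calR \subseteq T_\calF$, apply Corollary~\ref{corollary:regular} to $F(r_1,\dots,r_n)$, and conclude. The extra bookkeeping you supply (that $\calJ$ and $\calI$ agree on $T_\calR$) is a correct and welcome elaboration of a step the paper leaves implicit.
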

\begin{proof}
  Let $F \in \calF^{(n)}$, for some $n \in \nat$.
  Let $R_1, \dots, R_n$ be regular languages defined by regular expressions $r_1, \dots, r_n \in
  T_\calR \subseteq T_\calF$.
  By Corollary~\ref{corollary:regular}, $\hcalJ (F (r_1,\dots, r_n))$ is regular.
  Hence $\calJ (F)$ preserves regularity.
  \qed
\end{proof}

\begin{example}\label{example:linear-left-derivable}
  Many operators are in fact linear left derivable.
  \begin{enumerate}
  \item Intersection and complement are linear left derivable.
  \item The shuffle operation is linear left derivable, but the
    derivative of the shuffle closure
    contains a nested application of shuffle closure.
  \item Inverse and non-erasing homomorphism are linear
    left derivable.
  \item For $k>0$, the set $\{ f_{i,k} \mid 0<i\le k\}$ is linear left derivable.
  \item The left quotient is not linear left derivable, but the right quotient is linear left derivable.
  \item The function \textit{suffixes} is not left derivable; the function \textit{prefixes} is
    linear left derivable.
  \item The function \textit{reverse} is not left derivable.
  \item Both, $\mathbb{H}_k$ and $\mathbb{L}_k$ are linear left derivable.
  \item Tilde and bar are linear left derivable.
  \end{enumerate}
  By Corollary~\ref{corollary:regular}, regular languages are
  closed under $\varepsilon$-testable operators that are linear
  left derivable:
  $\cap$, $\neg$, $\|$, $h^{-1}$, non-erasing $h$, $\mathbb{H}_k$,
  tilde, bar.
\end{example}

For a set of unary operators, linear left derivability amounts to definability
by a rational finite state transducer.
\begin{theorem}\label{thm:transducer-from-linear-left-derivability}
  Let $\calF = \calF^{(1)} = \{F_1, \dots, F_m\}$ be a ranked alphabet
  of unary operators and $(\Power (\Sigma^*), \calJ)$ be a linear left
  derivable $\calF$-algebra which is $\varepsilon$-testable using the
  identity function.
  Then, for each $1\le l\le m$ and $L\subseteq\Sigma^*$, $\calJ (F_l)
  (L)$ is equal to $T (L)$ where $T$ is a rational finite state transducer.
\end{theorem}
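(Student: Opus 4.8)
The plan is to read off a transducer directly from the linear left derivability equations, using the finite set of iterated derivatives guaranteed by Theorem~\ref{theorem:finite-derivative} as the state space. The key observation is that when all operators are unary and $\varepsilon$-testable by the identity, the syntactic derivative of $F_l(r)$ has the special shape
\[
  D(a, F_l(r)) = \sum_{j\in J(F_l,a)} v_j \cdot F_{i_j}(D(w^j, r_{?}))
\]
with a single argument slot, so every iterated derivative of $F_l(r)$ is a finite sum of terms of the form $v \cdot F_i(D(w,r))$ for suffixes $v$ of elements of $V$, operators $F_i \in \calF$, and iterated derivatives $D(w,r)$ of the original argument. I would fix a regular expression $r$ with $\L(r) = L$ and take as transducer states the (finitely many, by Theorem~\ref{theorem:finite-derivative}) dissimilar iterated derivatives of $F_l(r)$, together with their ``residual output words'': concretely, a state is a pair consisting of a word $v$ still to be emitted and an expression $F_i(s)$ where $s \in D^*(r)$. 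The transducer reads a symbol $a$ of the input $L$-word, consumes it in the underlying derivative $D(w a, r)$, and emits the $v_j$ prefixes dictated by the linear-left-derivability equation.

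More precisely, I would build $T$ as follows. Its input and output alphabet are both $\Sigma$. Because the $v_j$ are fixed finite words, I would split each output transition into $|v_j|$ single-symbol output steps using auxiliary states, so $T$ is a genuine letter-to-letter-or-$\varepsilon$ rational transducer (a finite-state transducer with transitions in $Q \times (\Sigma \cup \{\varepsilon\}) \times (\Sigma \cup \{\varepsilon\}) \times Q$). The initial state corresponds to $F_l(r)$; from a state tracking $F_i(s)$, on input $a$, for each $j \in J(F_i,a)$ there is a path that reads $a$, writes $v_j$ (letter by letter through intermediate states), and moves to the state tracking $F_{i_j}(D(w^j, s))$. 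A state tracking $F_i(s)$ is final exactly when $\varepsilon \in \calJ(F_i)(\L(s))$; since $F_i$ is $\varepsilon$-testable by the identity, this is just $\varepsilon \in \L(s)$, i.e.\ $N(s) = \Rempty$, which is decidable and depends only on $s$. Accepting runs may also end ``mid-output'' inside the auxiliary chain for a $v_j$, so I would additionally mark those intermediate states final when the continuation state is final; this is a routine bookkeeping adjustment.

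Correctness then reduces to the derivative characterization already in hand. By Theorem~\ref{thm:left-closed-derivative} (specialized via Theorem~\ref{theorem:finite-derivative}), $\hcalJ(D(a, t)) = \LQ{a}\hcalJ(t)$, so iterating gives $\hcalJ(D(u, F_l(r))) = \LQ{u}(\calJ(F_l)(L))$ for every $u \in \Sigma^*$, hence $u \in \calJ(F_l)(L) \iff \varepsilon \in \hcalJ(D(u, F_l(r)))$. I would prove by induction on $|u|$ that the runs of $T$ on input $u$ whose output (concatenated) is $x$ are in bijection with the summands of $D(u, F_l(r))$ of the form $x' \cdot F_i(D(w, r))$ where $x$ is a prefix of $x'$, with the run ``pending'' inside $x'$; the linear-left-derivability equation is exactly the inductive step that pushes one more input symbol through and prepends the corresponding $v_j$. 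Combined with the finality condition, a run on $u$ outputs a word $x$ reaching a final state iff some summand $x \cdot F_i(s)$ of $D(u, F_l(r))$ has $\varepsilon \in \hcalJ(F_i(s)) = \LQ{x}(\text{that summand's language})$ with the pending output consumed, which unwinds to $x \in $ that summand's contribution to $\LQ{\varepsilon}$, i.e.\ $x \in \calJ(F_l)(L)$ when $u = \varepsilon$; the general case follows because $T(L) = \{ x \mid \exists u \in L,\ (u,x) \text{ accepted by } T\}$ and the derivative equation makes the two descriptions of $\calJ(F_l)(L)$ coincide. The main obstacle I expect is the careful matching of ``pending output'' in partial runs with ``not-yet-emitted prefixes'' $v_j$ in iterated derivatives — in particular ensuring that accepting in the middle of emitting some $v_j$ corresponds correctly to membership — rather than anything deep; everything substantial (finiteness of states, the derivative identity) is already supplied by the earlier theorems.
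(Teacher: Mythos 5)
There is a genuine gap, and it is conceptual rather than bookkeeping. The theorem quantifies over \emph{all} $L\subseteq\Sigma^*$ and is meant to exhibit the single operation $\calJ(F_l)$ as a rational transduction, uniformly in $L$ (that is the whole point: a rational transduction preserves regularity, so this re-proves closure). Your construction starts by fixing a regular expression $r$ with $\L(r)=L$ and builds the state set from the iterated derivatives of $F_l(r)$. This presupposes $L$ is regular, produces a different machine for each $L$, and therefore does not prove the stated theorem; for regular $L$ a transducer depending on $L$ with $T(L)=\calJ(F_l)(L)$ is essentially trivial to produce once Corollary~\ref{corollary:regular} is known. The paper's transducer instead has state set exactly $\{F_1,\dots,F_m\}$, is read off from the linear-left-derivability data alone, and makes no reference to any expression denoting $L$.

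The second, related problem is that your transitions swap the roles of input and output. In the equation $\LQ{a}(\calJ(F)(L))=\bigcup_j v_j\cdot\calJ(F_{i_j})(\LQ{w_j}L)$, the symbol $a$ and the words $v_j$ live in the \emph{output} language $\calJ(F)(L)$, while the $w_j$ are chunks consumed from the \emph{input} word $w\in L$; accordingly the paper's transition $(F,\,w_j,\,a\cdot v_j,\,F_{i_j})$ reads $w_j$ and emits $a\cdot v_j$ (which is also why the resulting transducers emit at least one symbol per step, as remarked after the theorem). Your machine reads $a$ from the input, emits $v_j$, and applies $D(w^j,-)$ inside the state. Tracking $D(u,F_l(r))$ along a run means $u$ is being treated as a prefix of a candidate \emph{output} word (since $\hcalJ(D(u,F_l(r)))=\LQ{u}(\calJ(F_l)(L))$), yet your acceptance condition requires $u\in L$, i.e.\ $u$ to be an \emph{input} word. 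These two readings of $u$ are incompatible, so the claimed bijection between runs on $u$ and summands of $D(u,F_l(r))$ cannot establish $T(L)=\calJ(F_l)(L)$. Stripped of the spurious output, what you have built is the DFA of Corollary~\ref{corollary:regular} for $F_l(r)$, not a transducer realizing $\calJ(F_l)$.
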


The reverse implication does not hold because transducers may, in general, consume
an unbounded amount of input before producing an output. The
transducers resulting from
Theorem~\ref{thm:transducer-from-linear-left-derivability} only
consume bounded input before producing at least one output symbol.

\section{Conclusion}

We introduce a framework for constructing derivatives for
regular expressions enhanced with new operators. If these operators are left
derivable, we obtain an algorithm for the word problem; if they are
linear left derivable, we can construct a DFA from an enhanced
expression. In fact, unary operators with this property are rational transductions.

Some of the operators considered in this paper are known to be regularity preserving, yet, they
fail to be linear left derivable or to be $\varepsilon$-testable.  In
future work, we plan to address these restrictions by generalizing linear derivability as well as the nullability test.

\bibliography{main}

\appendix{}
\section{Proofs and auxiliary lemmas}

\begin{proof}[of Lemma~\ref{lemma:extended-nullability}]
  If $f = \calJ (F)$ for $F\in\calF^{(n)}$ is $\varepsilon$-testable, then there exists a boolean
  function $B_f$ such that $\varepsilon\in f (L_1, \dots, L_n)$ iff $B_f (\varepsilon \in L_1,
  \dots, \varepsilon \in L_n)$. Now set
  \begin{displaymath}
    N (F (t_1, \dots, t_n)) := B_f (N (t_1), \dots, N (t_n))
  \end{displaymath}
  for the desired extension of $N$.   \qed 
\end{proof}

\begin{proof}[of Theorem~\ref{theorem:finite-derivative}]
  In equation~\eqref{eq:1}, the set $V$ is finite as it is a union of finite sets.
  Hence, the set $\{ v \mid V \IsExtensionOrEqual v \}$ is also finite.
  As $\calF$ and $D^+ (r_j)$ are also assumed finite, the
  set $D^+ (F (r_1,\dots,r_n))$ is finite.

  The set $D^+ (F (r_1,\dots,r_n))$ is also closed under formation of derivatives. Consider the
  derivative of a summand of the form $D(a,v\cdot F (r'_1,\dots,r'_n))$:
  \begin{align*}
    D(a,v\cdot F (\overline{r'_i})) &=
    \begin{cases}
      \Rnull & v = bv', a\ne b \\
      v'\cdot F (\overline{r'_i}) & v=av' \\
      \sum_{j\in J(F,a)} v_j\cdot F_{i_j} (\overline{D (w^j_{k},
        r'_{\alpha^j_k})}^{k=1,\dots,n_j}) & v=\varepsilon
    \end{cases}
  \end{align*}
  In each case, the outcome is covered by the right hand side of Equation~\eqref{eq:1}.
  \qed
\end{proof}

We need a few auxiliary lemmas before we can prove that $D^+ (r)$ is
closed under the derivative operation.
\begin{lemma}\label{lemma:null-in-dplus}
  For all $r\in T_\calR$, $\Rnull \in D^+ (r)$.
\end{lemma}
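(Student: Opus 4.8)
The plan is a routine structural induction on $r \in T_\calR$, tracking how $\Rnull$ propagates through the inductive clauses of Figure~\ref{fig:iterated-derivatives}. First I would dispatch the base cases directly: for $r = \Rnull$ we have $D^+(\Rnull) = \{\Rnull\}$, for $r = \Rempty$ we have $D^+(\Rempty) = \{\Rnull\}$, and for $r = a$ we have $D^+(a) = \{\Rnull, \Rempty\} \ni \Rnull$, so the claim holds in each case.

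For the inductive step I would handle the three compound forms. For $r + s$: by the induction hypothesis $\Rnull \in D^+(r)$ and $\Rnull \in D^+(s)$, hence $\Rnull \oplus \Rnull = \NF(\Rnull + \Rnull) = \Rnull \in D^+(r) \oplus D^+(s) = D^+(r+s)$, using that $\Rnull$ is a unit for $\oplus$ (so $\Rnull \oplus \Rnull \equiv \Rnull$). For $r \cdot s$: the right-hand side is $D^+(r) \odot s \oplus \bigoplus D^+(s)$; taking $\Rnull \in D^+(r)$ by IH gives $\Rnull \odot s = \Rnull$ (by the first clause of $\odot$), and the nullary sum in $\bigoplus D^+(s)$ is by definition $\Rnull$, so the element $\Rnull \oplus \Rnull = \Rnull$ lies in $D^+(r \cdot s)$. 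For $r^*$: the right-hand side is $\bigoplus D^+(r) \odot r^*$; again the nullary sum gives $\Rnull$, so $\Rnull \in D^+(r^*)$ (alternatively, $\Rnull \in D^+(r)$ by IH yields $\Rnull \odot r^* = \Rnull$, which is a one-element sum).

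There is essentially no obstacle here; the only point requiring a little care is reading the conventions stated just before the lemma—namely that $\bigoplus S$ includes the nullary sum, which is identified with $\Rnull$, and that sums are taken modulo associativity, commutativity, and idempotence (so $\oplus$ genuinely has $\Rnull$ as a unit at the level of $\equiv$-classes). Once those conventions are in hand, each clause of the definition manifestly contains $\Rnull$, either via the nullary sum or via the absorbing behaviour of $\Rnull$ under $\odot$, and the induction goes through immediately.
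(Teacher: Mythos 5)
Your proof is correct and follows essentially the same route as the paper's: a structural induction in which $\Rnull$ is produced in each compound case from the induction hypothesis together with the absorbing behaviour of $\Rnull$ under $\odot$ and its unit behaviour under $\oplus$ (the paper uses $\Rnull\in D^+(s)$ from the IH where you invoke the nullary sum in $\bigoplus$, but both are valid readings of the definition). No gaps.
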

\begin{proof}[Lemma~\ref{lemma:null-in-dplus}]
  Induction on $r$.

  \textbf{Case }$\Rnull$, $\Rempty$, $a$: Immediate.

  \textbf{Case }$r+s$: $\Rnull = \Rnull \oplus \Rnull \in D^+ (r) \oplus D^+ (s)$, by induction.

  \textbf{Case }$r\cdot s$: $\Rnull = \Rnull \odot s \oplus \Rnull \in D^+ (r)\odot s \oplus D^+ (s)$, by
  induction.

  \textbf{Case }$r^*$: $\Rnull = \Rnull \odot r^* \in D^+ (r) \odot r^*$.
\end{proof}

\begin{lemma}\label{lemma:dplus=plusd}
  For all $r$ and $s$, $D (a, r \oplus s) = D (a, r) \oplus D (a,s)$.
\end{lemma}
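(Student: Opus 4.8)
The plan is to prove Lemma~\ref{lemma:dplus=plusd} by exploiting the definition $r \oplus s = \NF(r+s)$ together with two facts about the normal form: (i) $\NF(t) \equiv t$, and $\NF$ maps $\equiv$-related terms to syntactically equal terms; and (ii) as a consequence, the smart union $\oplus$, restricted to terms in normal form, is associative, commutative, idempotent, and has $\Rnull$ as a unit. Fact (ii) is immediate from (i) and the five defining pairs of $\equiv$; e.g.\ $r\oplus(s\oplus t) = \NF(r + \NF(s+t)) = \NF(r+(s+t)) = \NF((r+s)+t) = (r\oplus s)\oplus t$, using that $\NF(s+t)\equiv s+t$ and $\NF$ respects $\equiv$. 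I will also use that $D$ distributes over $+$ along an entire spine: iterating the defining clause $D(a, p+q) = D(a,p)\oplus D(a,q)$ gives $D(a, p_1 + \cdots + p_n) = D(a,p_1)\oplus\cdots\oplus D(a,p_n)$ for any association of the sum.

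The core step is to decompose $r$ and $s$ into their $+$-atoms. Since $r$ (as the representative of its class) is in normal form, it is a canonical sum $p_1 + \cdots + p_k$ of pairwise distinct non-$\Rnull$ terms, none of which is itself a sum, and $r = p_1 \oplus \cdots \oplus p_k$; similarly $s = q_1 \oplus \cdots \oplus q_l$. By definition of $\NF$, the term $r \oplus s = \NF(r+s)$ is then the canonical sum over the set $\{p_1,\dots,p_k\}\cup\{q_1,\dots,q_l\}$ (duplicates between the two lists, and any $\Rnull$, are removed). Applying $D$ and distributing over this spine yields the $\oplus$-combination of $D(a,u)$ over $u\in\{p_1,\dots,p_k\}\cup\{q_1,\dots,q_l\}$; re-introducing the repeated and $\Rnull$-valued summands $D(a,\cdot)$ is harmless because $\oplus$ is idempotent and $D(a,\Rnull)=\Rnull$ is its unit, so this equals $\bigl(D(a,p_1)\oplus\cdots\oplus D(a,p_k)\bigr)\oplus\bigl(D(a,q_1)\oplus\cdots\oplus D(a,q_l)\bigr)$. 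By spine-distribution of $D$ the two groups are exactly $D(a,r)$ and $D(a,s)$, giving $D(a,r\oplus s) = D(a,r)\oplus D(a,s)$.

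The same computation can alternatively be packaged as an induction on the number of $+$-nodes on the spine of $r$: the step case uses $r = r_1 + r'$ with $(r_1 + r')\oplus s = r_1 \oplus (r'\oplus s)$ (associativity of $\oplus$) plus the induction hypothesis on $r'$, and the base case ($r$ not a sum) is the direct case analysis against the atom decomposition of $s$ sketched above. The main obstacle — and the reason one cannot simply induct on the syntax of the term $r\oplus s$ — is that $\NF$ collapses the $+$-structure and $\oplus$ need not decrease term size (when no duplicate or $\Rnull$ summand is actually removed). The argument must therefore be routed through the algebraic laws of $\oplus$ on normal forms rather than through structural recursion on $r\oplus s$, and the only genuine bookkeeping is tracking which atoms survive normalization.
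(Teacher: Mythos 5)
Your proof is correct, but it takes a genuinely different and more detailed route than the paper's. The paper proves Lemma~\ref{lemma:dplus=plusd} by a four-way case split ``in the definition of $\oplus$'': $s=\Rnull$, $r=\Rnull$, $r=s$, and otherwise $r\oplus s=r+s$, each case being a one-line computation. That argument implicitly treats $\oplus$ as a smart constructor that only strips a $\Rnull$ argument or collapses syntactically equal arguments, and in the residual case returns the literal term $r+s$ --- which is not guaranteed if $\NF$ performs full ACI-normalization (flattening, reordering, and deduplicating atoms across the two summands). You instead route the argument through the canonical-form properties of $\NF$: decompose $r$ and $s$ into their $+$-atoms, observe that $r\oplus s$ is the canonical sum over the union of the two atom sets, push $D$ through the spine using $D(a,p+q)=D(a,p)\oplus D(a,q)$, and regroup with associativity, commutativity, idempotence, and the $\Rnull$-unit law of $\oplus$ on normal forms. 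This buys robustness --- your proof covers exactly the situations (cross-summand duplicates, reassociation) that the paper's last case glosses over, and your remark that one cannot structurally recurse on $r\oplus s$ is well taken --- at the cost of leaning on properties of $\NF$ (that it returns a representative of the $\equiv$-class and identifies $\equiv$-related terms) which the paper assumes but never spells out. Both arguments are sound under the intended reading of $\NF$; yours is the one that would survive a fully normalizing implementation.
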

\begin{proof}[Lemma~\ref{lemma:dplus=plusd}]
  By cases in the definition of $\oplus$.

  \textbf{Case }$s=\Rnull$. $D (a, r \oplus \Rnull) = D (a, r) = D (a, r) \oplus \Rnull = D (a, r)
  \oplus D (a, \Rnull)$.

  \textbf{Case }$r=\Rnull$. Analogously.

  \textbf{Case }$r=s$. $D (a, r \oplus r) = D (a, r) = D(a, r) \oplus D (a,r)$.

  \textbf{Case }$r\oplus s = r+s$. $D (a, r \oplus s) = D (a, r+s) = D (a, r) \oplus D (a, s)$.
\end{proof}
\begin{lemma}\label{lemma:dodot=ddot}
  For all $r$ and $s$,  $D (a, r\odot s) = D (a, r\cdot s)$.
\end{lemma}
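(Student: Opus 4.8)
The plan is to prove Lemma~\ref{lemma:dodot=ddot} by a straightforward case analysis following the definition of the smart concatenation operator $\odot$. Recall that $r \odot s$ equals $\Rnull$ when $r = \Rnull$ or $s = \Rnull$, equals $r$ when $s = \Rempty$, equals $s$ when $r = \Rempty$, and equals $r \cdot s$ otherwise. In every case except the last, I must check that applying $D(a,-)$ to the simplified form yields the same term (up to the similarity identities folded into $\oplus$) as applying $D(a,-)$ to the literal product $r\cdot s$ and expanding via the product rule $D(a, r\cdot s) = D(a,r)\odot s \oplus N(r)\odot D(a,s)$.

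First I would handle the case $r = \Rnull$: then $r\odot s = \Rnull$, so $D(a, r\odot s) = \Rnull$; on the other side, $D(a, \Rnull \cdot s) = D(a,\Rnull)\odot s \oplus N(\Rnull)\odot D(a,s) = \Rnull\odot s \oplus \Rnull\odot D(a,s) = \Rnull \oplus \Rnull = \Rnull$, using $D(a,\Rnull)=\Rnull$, $N(\Rnull)=\Rnull$, and that $\Rnull\odot t = \Rnull$ for any $t$. Next, the case $s = \Rnull$: $r\odot s = \Rnull$ and $D(a, r\cdot \Rnull) = D(a,r)\odot \Rnull \oplus N(r)\odot D(a,\Rnull) = \Rnull \oplus N(r)\odot\Rnull = \Rnull \oplus \Rnull = \Rnull$. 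For $s = \Rempty$ (and $r \neq \Rnull$): $r\odot s = r$, so the left side is $D(a,r)$; the right side is $D(a, r\cdot\Rempty) = D(a,r)\odot\Rempty \oplus N(r)\odot D(a,\Rempty) = D(a,r) \oplus N(r)\odot\Rnull = D(a,r)\oplus\Rnull = D(a,r)$, using that $t\odot\Rempty = t$ and $D(a,\Rempty)=\Rnull$. For $r = \Rempty$ (and $s \neq \Rnull$, $s\neq\Rempty$, so that this case is reached): $r\odot s = s$, left side $D(a,s)$; right side $D(a,\Rempty\cdot s) = D(a,\Rempty)\odot s \oplus N(\Rempty)\odot D(a,s) = \Rnull\odot s \oplus \Rempty\odot D(a,s) = \Rnull \oplus D(a,s) = D(a,s)$, using $N(\Rempty)=\Rempty$ and $\Rempty\odot t = t$. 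Finally, in the otherwise case $r\odot s = r\cdot s$ and the two sides are literally identical.

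The only mild subtlety — and the closest thing to an obstacle — is keeping the precedence and the similarity quotient straight: $\odot$ binds tighter than $\oplus$, and $\oplus$ is defined as $\NF(r+s)$, so all the "$t\oplus\Rnull = t$" and "$\Rnull\oplus t = t$" simplifications are instances of the left/right unit laws for $+$ baked into $\equiv$ and hence into $\NF$; I would note this once rather than belabor it. One should also be careful that the case split on $\odot$ is exhaustive and that the priority of the clauses ($r=\Rnull$ or $s=\Rnull$ first, then $s=\Rempty$, then $r=\Rempty$) matches exactly how I pick which subcase to invoke, since e.g. $r=s=\Rempty$ falls under the $s=\Rempty$ clause. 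With that bookkeeping in place the proof is entirely mechanical, so I would present it compactly as "by cases in the definition of $\odot$", spelling out the $\Rnull$ and unit cases as above and dismissing the final case as trivial.
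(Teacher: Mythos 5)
Your proposal is correct and follows essentially the same route as the paper: a case analysis on the clauses defining $\odot$, computing both sides via $D(a,\Rnull)=D(a,\Rempty)=\Rnull$, $N(\Rnull)=\Rnull$, $N(\Rempty)=\Rempty$, and the unit laws absorbed into $\oplus$ and $\odot$. The extra care you take about clause priority and the $\NF$-quotient is sound bookkeeping that the paper leaves implicit.
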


\begin{proof}[Lemma~\ref{lemma:dodot=ddot}]
  By cases in the definition of $\odot$.

  \textbf{Case }$r = \Rnull$.
  $D (a, r\odot s) = D (a, \Rnull) = \Rnull$.
  $D (a, r\cdot s) = D (a, \Rnull \cdot s) = D
  (a, \Rnull)\odot s \oplus N (\Rnull) \odot D (a, s) = \Rnull \oplus \Rnull = \Rnull$.

  \textbf{Case }$s= \Rnull$. Similar.

  \textbf{Case }$r = \Rempty$.
  $D (a, r\odot s) = D (a, s)$.
  $D (a, r \cdot s) = D (a, \Rempty \cdot s) = D (a, \Rempty) \odot s \oplus N (\Rempty) \odot D (a, s) = \Rnull
  \oplus D (a,s) = D (a, s)$.

  \textbf{Case }$s = \Rempty$. Similar.

  \textbf{Case }$r\odot s = r\cdot s$. Immediate.
\end{proof}

\begin{proof}[of Theorem~\ref{theorem:closure-under-derivation}]
  \textbf{Part 1.} By induction on $r$.

  \textbf{Case }$\Rnull$: $D (a, \Rnull) = \Rnull \in D^+ (\Rnull)$.

  \textbf{Case }$\Rempty$: $D (a, \Rempty) = \Rnull \in D^+ (\Rempty)$.

  \textbf{Case }$b$: $D (a, b) \in \{\Rnull, \Rempty\} = D^+ (b)$.

  \textbf{Case }$r+s$: immediate.

  \textbf{Case }$r\cdot s$: $D (a, r\cdot s) = D (a, r) \odot s \oplus N (r)\odot D (a,s)$.
  By induction, $D (a,r) \in D^+ (r)$ and $D (a,s) \in D^+ (s)$. We distinguish two cases on the
  outcome of $N (r)$.

  \textbf{Subcase }$N (r) = \Rnull$: In this case $D (a, r\cdot s) = D (a, r) \odot s \oplus \Rnull \in D^+ (r) \odot
  s \oplus \Rnull$, by induction, and by Lemma~\ref{lemma:null-in-dplus}, $\Rnull\in D^+ (s)$, so $D^+ (r) \odot
  s \oplus \Rnull  \subseteq D^+ (r) \odot  s \oplus D^+ (s)$.

  \textbf{Subcase }$N (r) = \Rempty$: In this case $D (a, r\cdot s) = D (a, r) \odot s \oplus D (a,s) \in D^+ (r) \odot
  s \oplus D^+ (s)$, by induction.

  \textbf{Case }$r^*$: $D (a, r^*) = D (a, r) \odot r^* \in D^+ (r) \odot r^*$ by induction.

  \textbf{Part 2.} By induction on $r$. 

  \textbf{Case }$\Rnull$, $\Rempty$, $a$: immediate.

  \textbf{Case }$r+s$: if $t \in D^+ (r+s)$, then $t = r' \oplus s'$ for some $r' \in D^+ (r)$ and
  $s' \in D^+ (s)$. By Lemma~\ref{lemma:dplus=plusd}, $D (a, r' \oplus s') = D (a, r') \oplus D (a, s') \in D^+
  (r) \oplus D^+ (s)$, by induction.

  \textbf{Case }$r\cdot s$: if $t \in D^+ (r\cdot s)$, then $t = r' \odot s \oplus s'$ for some $r' \in D^+
  (r)$ and $s' \in D^+ (s)$. By Lemmas~\ref{lemma:dplus=plusd} and~\ref{lemma:dodot=ddot}, $D (a, r'
  \odot s \oplus s') = D (a, r' \odot s) \oplus D (a, s') = D (a, r'\cdot s) \oplus D (a, s')
  = D (a, r') \odot s \oplus N (r') \odot D (a,s) \oplus D (a,s')$.

  By induction, $D (a, r') \odot s \in D^+ (r) \odot s$.

  By item~1, $D (a,s) \in D^+ (s)$.

  By induction, $D (a, s') \in D^+ (s)$.

  Hence, $N (r') \odot D (a,s) \oplus D (a,s') \in \bigoplus D^+ (s)$, which proves the claim.

  \textbf{Case }$r^*$: if $t \in D^+ (r^*)$, then $t = r' \odot r^*$ for some $r' \in D^+ (r)$. Now,
  $D (a, r' \odot r^*) = D (a, r'\cdot r^*) = D (a, r') \odot r^* \oplus N (r') \odot D (a, r) \odot r^*$.
  
  By induction, $D (a, r') \odot r^* \in D^+ (r) \odot r^*$.

  By item~1, $D (a,r) \odot r^* \in D^+ (r) \odot r^*$.

  Hence, $D (a, r') \odot r^* \oplus N (r') \odot D (a, r) \odot r^* \in \bigoplus D^+ (r) \odot r^*$.
\end{proof}

\clearpage{}
\begin{proof}[Theorem~\ref{thm:transducer-from-linear-left-derivability}]
  For $F_l$, define the finite state transducer $T= (Q, \Sigma,
  \Sigma, I, A, \delta)$ as follows: 
  set of states $Q = \{F_1, \dots, F_m\}$; initial states $I = \{ F_l
  \}$;
  accepting states $A = Q$;
  the transition relation is the smallest relation $\delta\subseteq Q\times \Sigma^* \times
  \Sigma^* \times Q$ (relating a state and an input word to an output
  word and a next state) such that: For $F\in\calF$ and $a\in \Sigma$,
  if, by linear left derivability,  $$
  \LQ{a} (\calJ (F) (L)) = \bigcup_{j\in J
    (F, a)} v_j \cdot \calJ (F_{i_j}) (\LQ{w_j}L)
  $$ where $v_j, w_j
  \in \Sigma^*$, then
  $
  \delta \supseteq \{ (F, w_j, a\cdot v_j, F_{i_j}) \mid j\in J
  (F, a) \}
  $.

  Now recall the definition of $T_l (L)$ (where $l=1,\dots,m$), the
  language translated from state $F_l$, for the transducer  $T= (Q, \Sigma,
  \Sigma, I, F, \delta)$: 
  \begin{displaymath}
    v \in T_i (L) \Leftrightarrow
    \exists w \in L: \exists k\in\{1,\dots, m\}:
    (F_i, v, w, F_k) \in \delta^*
  \end{displaymath}
  where $\delta^* \subseteq Q \times \Sigma^* \times \Sigma^* \times
  Q$ is the iterated transition relation defined as the smallest
  relation such that:
  \begin{enumerate}
  \item\label{item:1} for all $F\in Q$, $(F, \varepsilon, \varepsilon, F) \in
    \delta^*$;
  \item\label{item:2} for all $F_i, F_j, F_k \in Q$, $v', v  , w', w\in \Sigma$, \\
    $(F_i, v'v, w'w, F_k) \in
    \delta^*$ iff $(F_i, v', w', F_j) \in \delta$ and $(F_j, v, w,
    F_k) \in \delta^*$.
  \end{enumerate}
  
  Now suppose that $v \in \calJ (F) (L)$ and show that $v\in T (L)$ by
  induction on the length of $v$.

  If $v=\varepsilon$, then $\varepsilon$-testability with the identity
  function implies that $\varepsilon\in L$. Hence $\varepsilon \in T
  (L)$ by the definition of $\delta^*$ (Item~\ref{item:1}).

  If $v\ne\varepsilon$, then consider $\LQ{a}v \in \LQ{a} (\calJ (F)
  (L))$. By assumption of linear left derivability, we find that 
  $$
  \LQ{a} (\calJ (F) (L)) = \bigcup_{j\in J
    (F, a)} v_j \cdot \calJ (F_{i_j}) (\LQ{w_j}L)
  $$ for some $v_j, w_j\in \Sigma^*$. Thus, there exists some $j\in J
  (F, a)$ such that 
  \begin{align*}
    \LQ{a}v &\in v_j \cdot \calJ (F_{i_j}) (\LQ{w_j}L)
  \end{align*}
  Hence, there is some $v'\in\Sigma^*$ such that $v = a\cdot v_j\cdot v'$ with
  $v'  \in \calJ (F_{i_j}) (\LQ{w_j}L)$. By induction, $v' \in T
  (\LQ{w_j}L)$ which means that there exists some $w' \in \LQ{w_j}L$
  and $F_k\in Q$ 
  such that $(F_{i_j}, w' , v', F_k) \in \delta^*$. By construction,
  $(F, w_j, a\cdot v_j, F_{i_j}) \in \delta$, so by definition of
  $\delta^*$ (Item~\ref{item:2}), $(F, w_j \cdot w',
  a\cdot v_j\cdot v', F_k) \in \delta^*$ and thus $v\in
  T (L)$.

  The reasoning for the reverse inclusion $T (L) \subseteq \calJ (F)
  (L)$ is analogous.
  \qed
\end{proof}

\clearpage
\section{On upward and downward closures}
\label{sec:upward-downw-clos}

The upward and downward closure of any language is regular \verb.\cite{HAINES196994}.: $\Upward{L} = \{ y
    \mid \exists x \in L. x \sqsubseteq y \}$ and $\Downward{L} \{ x
    \mid \exists y \in L. x \sqsubseteq y \}$. Here, ${\sqsubseteq}
    \subseteq \Sigma^*\times \Sigma^*$ is
    the \emph{subword ordering} where
    $x \sqsubseteq y$ iff $x =
    a_1\cdots a_n$ and $y = u_0a_1u_1\cdots u_{n-1}a_nu_n$ for some
    $n\in\nat$, $a_i\in\Sigma$, for $1\le i \le n$, and
    $u_i\in\Sigma^*$, for $0 \le i\le n$.

The upward closure is $\varepsilon$-testable using the
    identity function, but the downward closure is not, in fact,
    $\varepsilon \in \Downward{L}$ iff $L\ne \emptyset$.
    To see this, consider $L_1 = \emptyset$ and $L_2 = \{a\}$ with
    downward closures $\Downward{L_1} = \emptyset$ and $\Downward{L_2}
    = \{\varepsilon, a\}$. Thus, a hypothetical boolean function
    $B_\downarrow$ would have to satisfy $B_\downarrow (\False) =
    \False$ (according to $L_1$) and  $B_\downarrow (\False) =
    \True$ (according to $L_2$), a contradiction.

The upward closure operation is left derivable as $\LQ{a} (\Upward{L}) =
    \Upward{L} \cup \Upward{(\LQ{a}L)}$.\\
    The downward closure operation does not appear to be left derivable as the
    left quotient involves skipping ahead arbitrarily many characters:
    $\LQ{a} (\Downward{L}) = \sum_{w \in \Sigma^*}
    \Downward{(\LQ{wa}L)}$, but we have neither proof nor disproof for this conjecture.

Upward closure is linear left derivable, but downward closure
    is not.

\end{document}